\newcommand{\field}[1]{\mathbb{#1}}
\newcommand{\N}{\field{N}}
\newcommand{\R}{\field{R}}
\newcommand{\C}{\field{C}}
\newcommand{\Z}{\field{Z}}
\newcommand{\HH}{\mathscr H}
\newcommand{\LL}{\mathscr L}
\newcommand{\eps}{\varepsilon}
\newcommand{\ph}{\varphi}
\newcommand{\ran}{\mathrm{Ran}}
\renewcommand{\ker}{\mathrm{Ker}}
\newcommand{\norm}[1]{\mbox{$\left\| #1 \right\|$}}   
\newcommand{\sprod}[2]{\mbox{$\langle #1,#2 \rangle$}}   
\newcommand{\ket}[1]{\left| #1 \right\rangle}   
\newcommand{\bra}[1]{\left\langle #1 \right|}   
\newcommand{\lbphi}{\Phi}
\newcommand{\Nmu}{N\mspace{-1mu}(\mspace{-1mu}\mu\mspace{-1mu})}
\newcommand{\sgn}{\operatorname{sgn}}
\newcommand{\rst}{\! \upharpoonright \!} 
\newtheorem{thm}{Theorem}[section]
\newtheorem{cor}[thm]{Corollary}
\newtheorem{lemma}[thm]{Lemma}
\theoremstyle{remark}
\newtheorem*{remark}{Remark}
\newtheorem*{remarks}{Remarks}
\font\notefont=cmsl8 \pagestyle{myheadings}
\title{High Density Limit of the Fermi Polaron with Infinite Mass}
\author{Ulrich Linden\footnote{ulrich.linden@mathematik.uni-stuttgart.de} \, and David 
Mitrouskas\footnote{david.mitrouskas@mathematik.uni-stuttgart.de}\\  
\small Fachbereich Mathematik, Universit\"at Stuttgart, D-70569 Stuttgart, Germany}  
\begin{document}
\maketitle

\frenchspacing

\begin{abstract}
We analyze the ground state energy for $N$ identical fermions in a two-dimensional box of volume $L^2$ interacting with an external point scatterer. Since the 
point scatterer can be considered as an impurity particle of infinite mass, this system is a limit case of the Fermi polaron. We prove that its ground state 
energy in the limit of high density $N/L^2\gg 1$ is given by the polaron energy. The polaron energy is an energy estimate based on trial states up to 
first order in particle-hole expansion, which was proposed by F. Chevy \cite{Chevy} in the physics literature. The relative error in our result is shown to 
be 
small uniformly in $L$. Hence, we do not require a gap of fixed size in the spectrum of the Laplacian on the box. The strategy of our proof relies on a twofold 
Birman-Schwinger type argument applied to the many-particle Hamiltonian of the system.
\end{abstract}

\section{Introduction}
\label{Sect:Intro}

We consider a gas of $N$ identical fermions in a two-dimensional box $\Omega = [-\tfrac{L}{2}, \tfrac{L}{2}]^2$ with 
periodic boundary conditions. The fermions 
do not interact with each other, but with an attractive point scatterer centered at the origin. The Hamiltonian of the 
system acting on the Hilbert space 
$\HH_N := \bigwedge^N L^2(\Omega)$ of anti-symmetric $N$-particle functions is formally given by
\begin{equation} \label{Formal_Expression_Hamiltonian}
   \sum_{i=1}^N \left( -\Delta_{x_i} - g \delta(x_i) \right),
\end{equation}
where $\delta(x_i)$ denotes a Dirac-$\delta$-potential and $g$ plays the role of a coupling constant. This model can be 
seen as the \textit{Fermi polaron} 
in the limit case with an infinitely heavy impurity. The model known as Fermi polaron in the physics literature 
describes a system of $N$ identical fermions 
interacting with a distinct particle by two-body point interactions. It describes an impurity immersed in a gas of 
ultracold fermions.

We are interested in the asymptotics of the ground state energy of the system formally given by 
\eqref{Formal_Expression_Hamiltonian} in the case of high 
density $\rho = N/L^2 \gg 1$. For that we fix a parameter $\mu>0$ and choose the number of fermions by
\begin{equation} \label{Condition_N}
   N = \Nmu := \left| \{ k \in \frac{2 \pi}{L} \Z^2 \: : \: k^2 \leq \mu \} \right|.
\end{equation}
Since \eqref{Condition_N} ensures that the number of fermions coincides with the number of 
eigenvalues of $-\Delta$ that are less or equal than $\mu$, counting multiplicities, the parameter $\mu$ plays the role 
of the \textit{Fermi energy}. The high density regime is then equivalent to $\mu \gg 1$, since $$\rho(\mu) := N(\mu)/L^2 
= \mu/(4\pi) + \mathcal{O}(\sqrt{\mu})$$ as $\mu \to \infty$.\footnote{This formula holds for every fixed $L > 0$, cf. 
\eqref{Estimate_Eta_mu}.} 

We define the self-adjoint Hamiltonian corresponding to \eqref{Formal_Expression_Hamiltonian} by
\begin{align}\label{Second_Quantization}
H_\mu : = d\Gamma(h) \rst \HH_{\Nmu},
\end{align} 
the second quantization (restricted to $\HH_{\Nmu}$) of a one-particle operator $h$ on $L^2(\Omega)$ formally 
represented by $-\Delta_x - g \delta(x)$. The latter means that $h$ restricts to $-\Delta$ on 
$C_0^\infty(\Omega\setminus\{0\})$. All self-adjoint extensions of $-\Delta \rst C_0^\infty(\Omega\setminus\{0\})$ 
are well-known to form a one-parameter family of self-adjoint operators, whose resolvents are rank-one perturbations of 
the resolvent of the Laplacian. All operators $h$ out of this family, which do not coincide with the Laplacian, can be 
parametrized by their ground state energy $E_B < 0$ such that the resolvent reads
\begin{equation} \label{One_particle_Hamiltonian}
   (h - z)^{-1} = (-\Delta - z)^{-1} + \frac{1}{(E_B - z) \sprod{\gamma_{E_B}}{\gamma_{z}}} \ket{\gamma_z} 
\bra{\gamma_{\overline{z}}},
\end{equation}
where $\gamma_z \in L^2(\Omega)$ has Fourier coefficients $\hat{\gamma}_z(k) = (k^2 - z)^{-1}$ for $k \in \frac{2\pi}{L} 
\Z^2$. The spectrum 
of $h$ is purely discrete and has precisely one negative eigenvalue $E_B$. We consider $E_B$ as the model parameter 
which characterizes the interaction strength of the point scatterer.\footnote{For a comprehensive discussion of Laplace 
operators 
with point scatterers, we refer the reader to \cite{Solvable_Models}.}

Our goal is to study the ground state energy $E(\mu) := \min \sigma (H_\mu)$ for $\mu\gg 1$. We prove a conjecture from 
the physics literature stating that $E(\mu)$ is asymptotically given by  $E_0(\mu) + e_P(\mu)$, where
$$ 
   E_0(\mu) = \sum_{k^2 \leq \mu} k^2
$$
is the lowest eigenvalue of $d\Gamma(-\Delta) \rst \HH_{\Nmu}$, and the \textit{polaron energy} $e_P(\mu)$ is the lowest 
solution to the \textit{polaron equation}
\begin{equation} \label{Polaron_Equation}
   - e_P(\mu) = \frac{1}{L^2} \sum_{k^2 \leq \mu} G_\mu(- k^2 - e_P(\mu))^{-1}.
\end{equation}
Here
$$
   G_\mu(\tau) := \frac{1}{L^2} \sum_{k} \left( \frac{1}{k^2 - E_B} - \frac{\chi_{(\mu,\infty)}(k^2)}{k^2 + \tau} 
\right),
$$
for $\tau > -\mu$, where $\chi$ denotes the characteristic function, is a monotonically 
increasing function of $\tau$. We use the convention that all sums and products with respect to $k$ (or other momentum 
variables) run over the momentum space lattice $\frac{2\pi}{L} \Z^2$ unless specified otherwise. By the 
arguments in \cite[Section 8]{GL_Variational}, it is not difficult to see that \eqref{Polaron_Equation} has in fact a 
lowest solution $e_P(\mu)$, which satisfies 
$e_P(\mu) < 0$. 

Our \textbf{main result} states that 
\begin{equation} \label{Formal_description_result}
   E(\mu) = E_0(\mu) +  e_P(\mu) + \mathcal O\left( \frac{|e_P(\mu)|}{\sqrt{\log \mu}}\right)
\end{equation}
as $\mu \to \infty$. Moreover, we show that for large $\mu$, the asymptotics of $e_P(\mu)$ is given by
\begin{align} \label{Formal_estimate_Polaron_energy}
   e_P(\mu) = -\frac{\mu}{\log \mu} + \mathcal{O} \left(\frac{\mu \cdot \log \log \mu}{(\log \mu)^2}\right).
\end{align}

The polaron equation \eqref{Polaron_Equation} was first obtained by F. Chevy using a formal variational calculation 
\cite{Chevy}, in which the trial states were chosen to be one particle-hole excitations of the \textit{Fermi sea}
$$
   \ket{\text{FS}_\mu} = \prod_{k^2 \leq \mu} a_k^* \ket{\text{vac}}.
$$
A rigorous proof of the upper bound $E(\mu) \le E_0(\mu) + e_P(\mu)$ was recently presented in \cite{GL_Variational} 
using a generalized Birman-Schwinger principle for the many-body operator. While the variational principle allows for a 
derivation of an upper bound, it is more involved to obtain a suitable lower bound. A natural approach to derive a 
lower bound for $E(\mu)$ is to compare the eigenvalues $E_B\le \lambda_2 \le \lambda_3 \le...$ and $\lambda_1^0\le 
\lambda_2^0 \le \lambda_3^0 \le...$ of $h$ and $-\Delta$, respectively. In Appendix C we show that 
$(h-z)^{-1}-(-\Delta-z)^{-1}$ being of rank one implies
\begin{equation} \label{Eigenvalue_estimates_rankone_perturbation}
   \lambda^0_i \leq \lambda_{i+1}.
\end{equation}
Hence by rewriting 
$$
   E(\mu) = E_B + \sum_{i=2}^{N(\mu)}\lambda_i \hspace{1cm} \text{and} \hspace{1cm} E_0(\mu) = \sum_{i=1}^{N(\mu)} 
\lambda_i^0,
$$ 
we see that $E(\mu)$ compared to $E_0(\mu)$ is at most lowered by $E_B -\lambda^0_{N(\mu)} \ge E_B-\mu$, i.e.
\begin{align} \label{Naive_Lower_Bound}
E(\mu) \ge E_0(\mu) + E_B - \mu.
\end{align}
As anticipated by \eqref{Formal_description_result} and \eqref{Formal_estimate_Polaron_energy}, this does not provide 
the correct $\mu$-dependence for $\mu\gg 1$. In order to derive an improved lower bound we start with the many-body 
Birman-Schwinger operator for $H_\mu$ discussed in \cite{GL_Variational}. To this operator, we apply a second 
Birman-Schwinger type argument. Together with suitable estimates this is used to show that $E(\mu) - E_0(\mu)$ is 
bounded from below by the solution of a perturbed polaron equation, cf.\ \eqref{Perturbed_Polaron_Equation}. The 
analysis of the large $\mu$ asymptotics of this lower bound leads to the improved estimate. Note that the upper bound 
$E(\mu) \leq E_0(\mu) + e_P(\mu)$ as well as \eqref{Naive_Lower_Bound} can be equally derived for the analogous model in 
three dimensions. Although we expect an asymptotic result similar to \eqref{Formal_description_result} to hold in this 
case, it is not a direct consequence of the method presented in this article.

In the physics literature, the polaron energy is considered a good approximation to the ground state energy of the Fermi 
polaron in the high density regime 
$\mu \gg 1$ as well as in the weak coupling limit $|E_B| \ll 1$. In the strong coupling regime $|E_B| \gg 1$, it is 
expected that one fermion is tightly bound 
to the impurity particle. This behavior is represented by the so-called molecule ansatz \cite{ChevyMora,PDZ}. The two 
classes of trial states were 
investigated by numerical and analytical methods leading to indications for the anticipated difference between the shape 
of the ground state in the weak and 
strong coupling case \cite{GL_Variational, Linden, PS, Parish, Parish_Levinsen, Bruun_Massignan, Schmidt_Enss_1, 
Schmidt_Enss_2, Combescot_Giraud}. For this reason, the Fermi polaron is discussed in the context of the so-called 
BCS-BEC crossover.

We remark that the construction of a semi-bounded Hamiltonian for the Fermi polaron with an impurity of finite mass is 
much more involved compared to \eqref{Second_Quantization}, since it is not a simple generalization of a one-body 
operator. The problem was solved in two \cite{DFT, Dimock_Rajeev, GL_Variational, GL_Stability} and partially in three 
space dimensions \cite{Moser_Seiringer, CDFMT, Minlos}. Rigorous results concerning the ground state energy of these 
models mostly adressed the question of stability and the existence of a lower bound to the Hamiltonian that is uniform 
in the particle number $N$. A recent result shows that the energy shift of the three-dimensional Fermi polaron with an 
impurity of finite mass compared to the non-interacting system can be bounded by an expression depending only on the 
average density and the interaction strength \cite{Moser_Seiringer_2}. The question whether the polaron energy provides 
the correct asymptotics of the ground state energy similar to \eqref{Formal_description_result} remains an open problem.

This article is organized as follows. In Section \ref{Sect:Main_Result} we state and discuss our main result about 
$E(\mu)$ in the high density limit. In Section \ref{Sect:Definition} we state a Birman-Schwinger type principle for 
the many-body Hamiltonian $H_\mu$, which serves as the starting point of our analysis. A proof of this principle based 
on recent results from \cite{GL_Variational} is postponed to Appendix A. Section \ref{Sect:Definition} also includes 
the proof of $E(\mu) \leq E_0(\mu) + e_P(\mu)$ which is obtained from a variational calculation for the generalized 
Birman-Schwinger operator. A suitable lower bound for $E(\mu)$ is established in Section \ref{Sect:Birman_Schwinger}, 
and in Section \ref{Sect:High_Density_Limit} we analyze its asymptotics in the high density regime to conclude the 
proof of \eqref{Formal_description_result}.

\section{Main result}
\label{Sect:Main_Result}

\begin{thm} \label{Main_Theorem}
For given $L > 0$, $\mu > 0$ and $E_B < 0$ let the number $\Nmu$ of particles be fixed by \eqref{Condition_N}, and let the Hamiltonian $H_\mu$ 
be defined by \eqref{Second_Quantization} and \eqref{One_particle_Hamiltonian}. Then, the ground state energy $E(\mu) = 
\min\sigma(H_\mu)$ and the lowest 
solution of the 
polaron equation \eqref{Polaron_Equation}, $e_P(\mu) < 0$, satisfy the following property. There are constants $c, C > 0$ such that
\begin{equation} \label{Asymptotics}
   \left| E(\mu) - E_0(\mu) - e_P(\mu) \right| \leq C \cdot \frac{|e_P(\mu)|}{\sqrt{\log(\mu/|E_B|)}}
\end{equation}
for all $\mu/|E_B| \geq c$ and $L^2 \cdot |E_B| \geq 1$.
\end{thm}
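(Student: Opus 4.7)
The plan is to establish matching upper and lower bounds for $E(\mu)-E_0(\mu)$. The upper bound $E(\mu)\le E_0(\mu)+e_P(\mu)$ is produced in Section \ref{Sect:Definition} by feeding Chevy's one particle-hole trial state into the many-body Birman-Schwinger principle, so all the remaining work in Theorem \ref{Main_Theorem} is concentrated in the lower bound and in the asymptotic analysis of \eqref{Polaron_Equation}.

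For the lower bound I would follow the two-step Birman-Schwinger strategy advertised in the introduction. The first reduction, proved in Appendix A, turns the spectral condition $E(\mu)-E_0(\mu)\le e$ into the solvability of a generalized many-body Birman-Schwinger equation $T_\mu(e)\Psi=0$; the operator $T_\mu(e)$ still lives on the particle-hole sector above the Fermi sea, and plugging in Chevy-type test vectors immediately recovers \eqref{Polaron_Equation}. To obtain a lower bound I would instead factorise $T_\mu(e)$ into a block-diagonal part whose inverse is exactly $G_\mu(-k^2-e)$ and an off-diagonal part $R_\mu(e)$ encoding the mixing between different hole momenta, and then \emph{Schur-complement out} the particle degrees of freedom. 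This produces a $\Nmu\times\Nmu$ effective matrix whose smallest eigenvalue controls $E(\mu)-E_0(\mu)$ from below and whose vanishing condition is a perturbed polaron equation
\begin{equation*}
-e \;=\; \frac{1}{L^2}\sum_{k^2\le\mu}\bigl(G_\mu(-k^2-e)+r_k(e)\bigr)^{-1}
\end{equation*}
of the form announced in \eqref{Perturbed_Polaron_Equation}.

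The crux of the argument, and in my view the main obstacle, is the uniform estimate on the remainder $r_k(e)$. Since $R_\mu(e)$ is built out of particle propagators $(p^2+\tau)^{-1}$ with $p^2>\mu$, its entries are only logarithmically small in $\mu$; a Cauchy-Schwarz-type control of the off-diagonal sum, which appears to be the only structure available if one insists on uniformity in $L$ and refuses to invoke any gap of $-\Delta$ on the box, costs a square root and produces the $\sqrt{\log(\mu/|E_B|)}$ appearing in \eqref{Asymptotics}. This step has to be done carefully: the two-dimensional free resolvent diagonal element $\sprod{\gamma_z}{\gamma_z}$ itself grows only logarithmically with $\mu$, so every constant multiplying the principal logarithm must be tracked, and one cannot afford rough norm bounds on $R_\mu(e)$.

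Once the perturbed polaron equation is shown to admit a solution $e_\star(\mu)$ with $|e_\star(\mu)-e_P(\mu)|\le C|e_P(\mu)|/\sqrt{\log(\mu/|E_B|)}$, obtained by a monotonicity-plus-implicit-function argument based on the strict increase of $G_\mu$, it remains to pin down the leading asymptotics $-e_P(\mu)\sim \mu/\log\mu$ from \eqref{Formal_estimate_Polaron_energy}. This is an elementary lattice-sum computation: comparing the sum defining $G_\mu$ with the integral $(4\pi)^{-1}\int_0^\mu\log\tfrac{q^2+\tau}{q^2-E_B}\,\rd(q^2)$, one checks that $G_\mu(-k^2-e_P)=(4\pi)^{-1}\log(\mu/|E_B|)+\mathcal{O}(1)$ uniformly in $k^2\le\mu$, whereupon \eqref{Polaron_Equation} yields $-e_P\asymp\mu/\log\mu$ and, in particular, the error in \eqref{Asymptotics} is genuinely subleading. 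Combining with the upper bound of Section \ref{Sect:Definition} then completes the proof.
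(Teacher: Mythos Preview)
Your overall plan matches the paper's: upper bound via the Chevy trial state, lower bound via a second Birman--Schwinger step that produces a perturbed polaron equation, and then an asymptotic comparison of $G_\mu$ with $(4\pi)^{-1}\log(\mu/|E_B|)$ to extract the $\sqrt{\log}$ error. You have also correctly identified the mechanism behind the $\sqrt{\log}$: the cross terms have operator norm $\sim\sqrt{\log\widetilde\mu}$ while the diagonal $G_\mu$ is $\sim\log\widetilde\mu$, so their ratio is $(\log\widetilde\mu)^{-1/2}$.

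Where your sketch diverges from the paper is in the implementation of the second reduction. You propose to Schur-complement out the particle degrees of freedom and land on an $N(\mu)\times N(\mu)$ matrix indexed by hole momenta, with a $k$-dependent remainder $r_k(e)$. The paper does not do this: the Birman--Schwinger operator $\phi_\mu(\lambda)$ lives on all of $\HH_{N(\mu)-1}$, not just the one-hole sector, and the paper never projects to a finite matrix. Instead it rewrites $\phi_\mu(\lambda)$ in anti-normal order (holes created, particles annihilated), drops a manifestly positive particle--particle term, and bounds the hole--particle cross terms by a single \emph{$k$-independent} constant $r_{\mu,\lambda}=2\|a(\eta_\mu)A_{\mu,\lambda}^*\|$. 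The resulting operator has the form $K-V^*V$ with $K=G_\mu(T-\lambda)-r_{\mu,\lambda}$, and the second Birman--Schwinger is the operator inequality $K-V^*V\ge V^*(1-VK^{-1}V^*)V$, applied on $\HH_{N(\mu)-1}$. Only after this, using $T\ge E_0(\mu)$ on $\HH_{N(\mu)}$ and a further nontrivial positivity argument for the hole--hole term $\mathcal P_\mu^{(n)}(\lambda)$ (the lengthy computation at the end of Lemma~\ref{Second_Birman_Schwinger}), does one arrive at the scalar perturbed polaron equation~\eqref{Perturbed_Polaron_Equation}. These two positivity steps---dropping the particle--particle block before the Schur step, and then the hole--hole block after it---are what make the reduction go through, and your outline does not mention them; a naive Schur complement on a block decomposition of $\phi_\mu$ would not obviously produce a tractable finite-rank object.
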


\begin{remarks}\leavevmode
\begin{enumerate}[label=(\roman*)]
\item (Polaron energy) In Lemma \ref{Lemma_bound_for_e_p} we show that the polaron energy $e_P(\mu)$ satisfies
\begin{equation} \label{Polaron_Energy_Asymptotics}
  e_P(\mu) = - \frac{\mu}{\log(\mu/|E_B|)} \cdot \left(1 + \mathcal{O}\left(\frac{\log\log(\mu/|E_B|)}{\log(\mu/|E_B|)} \right) \right)
\end{equation}
as $\mu/|E_B| \to \infty$. The remainder term is uniformly bounded for $L^2 \cdot |E_B| \geq 1$.

\item ($L^2\cdot |E_B| \ge 1 $) We suggest to read this condition as a characterization of the interacting regime of the considered model. For a 
two-dimensional box with periodic boundary conditions and side length $L$, the smallest excitation energy of the kinetic energy operator equals $4\pi^2 / L^2$. 
This quantity is at most of the same order of magnitude as the two-body binding energy $|E_B|$ as long as $L^2 \cdot |E_B|\ge 1$.

\item (Thermodynamic limit) We stress that our result is not based on the assumption of a fixed spectral gap of the Laplacian, which can be read off the fact 
that the error terms in \eqref{Asymptotics} and \eqref{Polaron_Energy_Asymptotics} are uniformly bounded for $L$ large enough. This allows us to make the 
following assertion about the ground state energy in the thermodynamic limit. Assuming the existence of the limits of $E(\mu)-E_0(\mu)$ and $e_P(\mu)$ as $L 
\to \infty$ with $\mu > 0$ kept fixed, our result implies
\begin{equation} \label{TD_Limit}
   \lim_{L \to \infty} \left( E(\mu) - E_0(\mu) \right) = e_P^{\text{\tiny{TD}}}(\mu) + \mathcal{O} \left( 
\frac{|e_P^{\text{\tiny{TD}}}(\mu)|}{\sqrt{\log(\mu/|E_B|)}} \right)
\end{equation}
as $\mu/|E_B| \to \infty$ with $e_P^{\text{\tiny{TD}}}(\mu) := \lim_{L\to\infty} e_P (\mu)$. Even though the existence of the limits is naturally 
expected, we do not pursue to prove this in the present work.

\item[(iv)] (Weak coupling limit) By the discussion of the thermodynamic limit in the previous remark, the polaron energy also approximates 
the ground state energy in the thermodynamic limit as $E_B \nearrow 0$, cf. \eqref{TD_Limit}. This confirms another 
conjecture from the physics literature (see 
e.g. \cite{Parish}).
\end{enumerate}
\end{remarks}

\section{Preliminaries and Upper Bound}
\label{Sect:Definition}

In order to estimate the ground state energy $E(\mu)$, we consider the many-body operator $H_\mu$ directly instead of 
evaluating the eigenvalue sum of the 
one-body operator $h$. The starting point for our proof of Theorem \ref{Main_Theorem} is a criterion for upper and lower bounds of $E(\mu)$ in terms of a 
many-body Birman-Schwinger operator which we denote by $\phi_\mu(\lambda)$. It states that for $\lambda < E_0(\mu)$
\begin{equation} \label{Upper_Lower_Bound_Criterion}
   E(\mu) \leq \lambda \qquad \Leftrightarrow \qquad \inf\sigma(\phi_\mu(\lambda)) \leq 0,
\end{equation}
where equality on one side implies equality on both sides. A proof of this equivalence based on the results in \cite{GL_Variational} (in 
particular Theorem 6.3 and Corollary 6.4) is included in Appendix A. Here we give an explicit expression for 
$\phi_\mu(\lambda)$, which will be the main object to be analyzed.

In this regard it is convenient to use the formalism of second quantization. Let $a_k$ and $a_k^*$ for $k \in 
\frac{2\pi}{L} \Z^2$ be the usual fermionic creation and annihilation operators of plane waves $\varphi_k$ 
with
$$
   \varphi_k(x) := L^{-1} \cdot \exp(ikx).
$$
The operator $T := \sum_k \: k^2 \: a_k^* a_k$ on the antisymmetric Fock space over $L^2(\Omega)$ representing the 
kinetic energy is the second quantization of the 
Laplacian. The self-adjoint operators $\phi_\mu(\lambda)$ for $\lambda < E_0(\mu)$ form an analytic family of type (A), and for $\lambda < 0$ they are given by 
\begin{equation} \label{Def:Phi}
  \phi_\mu(\lambda) = \frac{1}{L^2} \sum_k \left( \frac{1}{k^2 - E_B} - \frac{1}{T + k^2 - \lambda} \right) + \frac{1}{L^2} \sum_{k,l} a_k^* \, \frac{1}{T + 
k^2 + l^2 - \lambda} \, a_l
\end{equation}
on $\HH_{\Nmu-1}$. The first operator in \eqref{Def:Phi} is an unbounded function of $T$, whereas the second term is a bounded 
operator on $\HH_{\Nmu-1}$.

For the proof of Theorem \ref{Main_Theorem} we derive an upper and a lower bound for $E(\mu)$ with the help of \eqref{Upper_Lower_Bound_Criterion}.\medskip

\noindent The \textbf{upper bound},
\begin{equation} \label{Upper_Bound}
   E(\mu) \leq E_0(\mu) + e_P(\mu),
\end{equation}
follows by the same arguments as in \cite[Section 8]{GL_Variational}. One considers the trial state
$$
   w := \sum_{q^2 \leq \mu} G_\mu(- q^2 - e_P(\mu))^{-1} a_q \ket{\text{FS}_\mu} \in D(\phi_\mu),
$$
and uses \eqref{Phi_Rewritten} and \eqref{Polaron_Equation} to verify
$$
   \sprod{w}{\phi_\mu(E_0(\mu) + e_P(\mu)) \, w} = 0.
$$
By \eqref{Upper_Lower_Bound_Criterion}, this implies $E(\mu) \leq E_0(\mu) + e_P(\mu)$.\medskip

\noindent The rest of this article is devoted to the proof of a corresponding \textbf{lower bound} for $E(\mu)$.

\section{Twofold Birman-Schwinger argument}
\label{Sect:Birman_Schwinger}

In order to estabilsh a lower bound to $E(\mu)$ with the help of \eqref{Upper_Lower_Bound_Criterion}, we derive a lower 
bound for the many-body Birman-Schwinger operator $\phi_\mu(\lambda)$. In view of \eqref{Upper_Bound}, it suffices to 
consider $\lambda \leq E_0(\mu) + e_P(\mu)$.

\begin{lemma} \label{Lower_Bound_Phi}
Let $\lambda \leq E_0(\mu) + e_P(\mu)$. Then, $\phi_\mu(\lambda) \geq \lbphi_\mu(\lambda)$, where 
$$
  \lbphi_\mu(\lambda) := \left( G_\mu(T - \lambda) - r_{\!\mu,\lambda} - a(\eta_\mu) \frac{1}{T - \lambda} a^*(\eta_\mu) 
\right) \rst \HH_{\Nmu-1}.
$$
and $\eta_\mu := L^{-1} \cdot \sum_{k^2 \leq \mu} \ph_k$. The constant $r_{\!\mu,\lambda}$ is defined by
\begin{equation} \label{definition_konstante}
 r_{\!\mu,\lambda} := 2 \cdot \norm{a(\eta_\mu) A_{\mu,\lambda}^*},
\end{equation}
where the operator $A_{\mu,\lambda} \in \LL(\HH_{\Nmu},\HH_{\Nmu-1})$ for $\lambda < E_0(\mu)$ is given by
$$
   A_{\mu,\lambda} := \frac{1}{L} \cdot \sum_{k^2 > \mu} \frac{1}{T + k^2 - \lambda} \, a_k \rst \HH_{\Nmu}.
$$
\end{lemma}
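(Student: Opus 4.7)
The strategy is to rewrite $\phi_\mu(\lambda)$ explicitly as $\lbphi_\mu(\lambda)$ plus a manifestly nonnegative operator and a bounded self-adjoint ``cross term'' that can be controlled in norm.

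First, I would split each sum in \eqref{Def:Phi} according to whether the momentum lies below or above the Fermi level, obtaining $G_\mu(T-\lambda) - \frac{1}{L^2}\sum_{k^2\leq\mu}(T+k^2-\lambda)^{-1}$ from the single-resolvent part, and four pieces from $\frac{1}{L^2}\sum_{k,l}a_k^*(T+k^2+l^2-\lambda)^{-1}a_l$ indexed by $k^2,l^2\lessgtr\mu$. Using the shift identities $a_l^* f(T) = f(T-l^2)a_l^*$ and $a_l f(T) = f(T+l^2)a_l$ together with the CAR relation $a_k a_l^* = \delta_{kl}-a_l^*a_k$, a direct calculation identifies the low-low piece as $\frac{1}{L^2}\sum_{k^2\leq\mu}(T+k^2-\lambda)^{-1} - a(\eta_\mu)(T-\lambda)^{-1}a^*(\eta_\mu)$, which cancels the leftover single-resolvent sum, and identifies the two mixed pieces jointly as $-Y - Y^*$ with $Y := a(\eta_\mu)A_{\mu,\lambda}^*$. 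This yields the decomposition
\[
\phi_\mu(\lambda) = G_\mu(T-\lambda) - a(\eta_\mu)(T-\lambda)^{-1}a^*(\eta_\mu) - (Y+Y^*) + (\mathrm{HH}),
\]
with $(\mathrm{HH}) := \frac{1}{L^2}\sum_{k^2,\,l^2>\mu}a_k^*(T+k^2+l^2-\lambda)^{-1}a_l$ the remaining ``high-high'' piece.

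Next, I would prove $(\mathrm{HH})\geq 0$. Since $T\geq E_0(\mu)-2\mu$ on $\HH_{\Nmu-2}$, $k^2+l^2>2\mu$, and $\lambda\leq E_0(\mu)+e_P(\mu)<E_0(\mu)$, the operator $T+k^2+l^2-\lambda$ is strictly positive (bounded below by $|e_P(\mu)|$) on that sector, so the integral representation $(T+k^2+l^2-\lambda)^{-1} = \int_0^\infty e^{-t(T+k^2+l^2-\lambda)}\,dt$ applies spectrally. Factoring the scalar part $e^{t\lambda}e^{-tk^2}e^{-tl^2}$ out and interchanging the absolutely convergent sum with the integral gives, for any $\psi\in\HH_{\Nmu-1}$,
\[
\langle\psi,(\mathrm{HH})\psi\rangle = \int_0^\infty e^{t\lambda}\left\|\frac{e^{-tT/2}}{L}\sum_{k^2>\mu}e^{-tk^2}\,a_k\psi\right\|^2 dt \geq 0.
\]

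Finally, dropping the nonnegative $(\mathrm{HH})$ term and invoking $Y+Y^*\leq\|Y+Y^*\|\leq 2\|Y\| = r_{\!\mu,\lambda}$ yields
\[
\phi_\mu(\lambda) \;\geq\; G_\mu(T-\lambda) - a(\eta_\mu)(T-\lambda)^{-1}a^*(\eta_\mu) - r_{\!\mu,\lambda} \;=\; \lbphi_\mu(\lambda),
\]
as desired. I expect the main obstacle to be the algebraic bookkeeping in the first step: tracking the sign flips from the CAR and the energy shifts incurred when moving $a_l^{\sharp}$ past the resolvents via the shift identities, and verifying that the four pieces recombine exactly into the claimed form without a leftover term. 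Once the decomposition is secured, the positivity of $(\mathrm{HH})$ via the integral representation and the triangle-inequality bound on $Y+Y^*$ are essentially routine.
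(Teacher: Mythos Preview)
Your plan matches the paper's proof essentially step for step: the anti-normal-ordering rewriting that produces the decomposition $\phi_\mu(\lambda)=G_\mu(T-\lambda)-a(\eta_\mu)(T-\lambda)^{-1}a^*(\eta_\mu)-(Y+Y^*)+(\mathrm{HH})$ is exactly the paper's \eqref{Phi_Rewritten}, the positivity of the high--high block via the heat-kernel integral is the paper's \eqref{Easy_Positivity_Argument}, and the norm bound $-(Y+Y^*)\ge -2\|Y\|=-r_{\mu,\lambda}$ is the final step the paper leaves implicit. One technical point you should add: the explicit formula \eqref{Def:Phi} you start from is only stated for $\lambda<0$, whereas the lemma covers $\lambda\le E_0(\mu)+e_P(\mu)$, which can be positive; the paper handles this by first carrying out the algebra for $\lambda<0$ and then observing that both sides of \eqref{Phi_Rewritten} are analytic in $\lambda$ on $(-\infty,E_0(\mu))$, so the identity extends.
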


\begin{remark}
The convergence of the series defining $A_{\mu,\lambda}$ and the boundedness of this operator can be verified by the 
Cauchy-Schwarz inequality and $\sum_{k} a_k^* a_k \rst \HH_N = N$, since for $\psi \in \HH_N$
\begin{align*}
   \sum_{k^2 > \mu} \norm{(T + k^2 - \lambda)^{-1} \, a_k \psi} 
   &\leq \sum_{k^2 > \mu} \frac{\norm{a_k \psi}}{E_0(\mu) - \mu + k^2 - \lambda} \\
   &\leq \sqrt{N} \cdot \norm{\psi} \cdot \left(\sum_{k^2 > \mu} \frac{1}{(E_0(\mu) - \mu + k^2 - \lambda)^2} \right)^{1/2}
   < \infty.
\end{align*}
\end{remark}

\begin{proof}
We rewrite \eqref{Def:Phi} by reversing the normal ordering of the $a_k$ and $a_k^*$ for all $k \in \frac{2\pi}{L} \Z^2$ with $k^2 \leq \mu$. Using the 
pull-through formula, which reads
\begin{equation} \label{Pull_Through_Formula}
   a_k f(T) \psi = f(T + k^2) a_k \psi
\end{equation}
for $\psi \in \HH_N$ and bounded continuous functions $f: [\min \sigma(T \rst \HH_N), \infty) \to \C$, we obtain for $\lambda < 0$
\begin{align}
   \phi_\mu(\lambda) &= G_\mu(T - \lambda) - a(\eta_\mu) \frac{1}{T - \lambda} a^*(\eta_\mu) \nonumber\\ 
   &\qquad - a(\eta_\mu) \, A_{\mu,\lambda}^* - A_{\mu,\lambda} \, a^*(\eta_\mu) + \frac{1}{L^2} \sum_{k^2,l^2 > \mu} 
a_k^* \, \frac{1}{T + k^2 + l^2 - \lambda} \, a_l. \label{Phi_Rewritten}
\end{align}
on $\HH_{\Nmu-1}$. Note that \eqref{Phi_Rewritten} applied to a vector $\psi \in D(\phi_\mu) \subseteq \HH_{\Nmu-1}$ is an analytic function of $\lambda$ for 
$\lambda < E_0(\mu)$. Since $\phi_\mu(\lambda)$ is an analytic family of type (A), \eqref{Phi_Rewritten} also holds for $\lambda \leq E_0(\mu) + e_P(\mu)$.

The last term in \eqref{Phi_Rewritten} is positive for $\lambda \leq E_0(\mu) + e_P(\mu)$, since on $\HH_{\Nmu-1}$
\begin{align}
 \sum_{k^2,l^2 > \mu} a_k^* \, \frac{1}{T + k^2 + l^2 - \lambda} \, a_l
  &= \int\limits_0^\infty dt \: \sum_{k^2,l^2 > \mu} a_k^* \: e^{-t(T + k^2 + l^2 - \lambda)} \: a_l \nonumber\\
  &= \int\limits_0^\infty dt \: \left( \sum_{k^2 > \mu} e^{-t k^2} a_k^* \right) e^{-t(T - \lambda)} \left( \sum_{l^2 > \mu} e^{-t l^2} a_l \right) \geq 0. 
\label{Easy_Positivity_Argument}
\end{align}
This proves the statement of the lemma.
\end{proof}

To derive a lower bound for $\lbphi_\mu(\lambda)$, we make use of the fact that this operator is given by an expression of the form $K - V^* V$ with $K = 
G_\mu(T - \lambda) - r_{\!\mu,\lambda}$ and a bounded operator $V = (T - z)^{-1/2} a^*(\eta_\mu)$. If $K$ is self-adjoint and $K \geq c > 0$, it 
follows readily
\begin{align}
   K - V^* V &= (K - V^* V) K^{-1} (K - V^* V) + V^* (1- V K^{-1} V^*) V \nonumber \\ &\geq V^* (1- V K^{-1} V^*) V. \label{Formal_Birman_Schwinger_Argument}
\end{align}
This is a key argument in the proof of the following lemma.

\begin{lemma} \label{Second_Birman_Schwinger}
 Let $\lambda \leq E_0(\mu) + e_P(\mu)$ and suppose that $G_\mu(-\mu-e_P(\mu)) > r_{\!\mu,\lambda}$. Then, $\lbphi_\mu(\lambda) \geq 0$ if
\begin{equation} \label{Perturbed_Polaron_Equation}
   E_0(\mu) - \lambda - \frac{1}{L^2} \sum_{k^2 \leq \mu} (G_\mu(E_0(\mu) - k^2 - \lambda) - r_{\!\mu,\lambda})^{-1} = 0.
\end{equation}
\end{lemma}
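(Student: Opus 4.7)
The plan is to invoke the Birman-Schwinger-type identity \eqref{Formal_Birman_Schwinger_Argument} with $K := G_\mu(T-\lambda) - r_{\!\mu,\lambda}$ and $V := (T-\lambda)^{-1/2} a^*(\eta_\mu)$, so that $\lbphi_\mu(\lambda) = K - V^*V$ on $\HH_{\Nmu-1}$. First I would verify that $K > 0$: on $\HH_{\Nmu-1}$ one has $T \geq E_0(\mu) - \mu$, since the minimum of the free kinetic energy over $\Nmu-1$ fermions is attained by removing a single momentum at the Fermi surface. Combined with monotonicity of $G_\mu$, the bound $\lambda \leq E_0(\mu) + e_P(\mu)$ yields $G_\mu(T - \lambda) \geq G_\mu(-\mu - e_P(\mu))$, which strictly exceeds $r_{\!\mu,\lambda}$ by hypothesis. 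The identity \eqref{Formal_Birman_Schwinger_Argument} then gives $\lbphi_\mu(\lambda) \geq V^*(1 - V K^{-1} V^*) V$, so the lemma reduces to proving $V K^{-1} V^* \leq 1$ on $\HH_{\Nmu}$, equivalently the operator inequality $a^*(\eta_\mu) K^{-1} a(\eta_\mu) \leq T - \lambda$ on $\HH_{\Nmu}$.

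I would first test this inequality on the Fermi sea $\ket{\text{FS}_\mu}$. Writing $B := \{k \in \frac{2\pi}{L}\Z^2 : k^2 \leq \mu\}$, the pull-through formula \eqref{Pull_Through_Formula} makes $a_q \ket{\text{FS}_\mu}$ for $q \in B$ into orthonormal eigenvectors of $T$ with eigenvalues $E_0(\mu) - q^2$, and a direct expansion gives $a_p^* a_q \ket{\text{FS}_\mu} = \delta_{pq} \ket{\text{FS}_\mu}$ for $p,q \in B$. A short computation then yields
\[
a^*(\eta_\mu) K^{-1} a(\eta_\mu) \ket{\text{FS}_\mu} = \frac{1}{L^2} \sum_{q \in B} \bigl( G_\mu(E_0(\mu) - q^2 - \lambda) - r_{\!\mu,\lambda} \bigr)^{-1} \ket{\text{FS}_\mu},
\]
and the perturbed polaron equation \eqref{Perturbed_Polaron_Equation} identifies the scalar exactly with $E_0(\mu) - \lambda$. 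Hence equality holds on $\C \ket{\text{FS}_\mu}$, and it remains to establish the inequality on the orthogonal complement.

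Here I would exploit the particle-hole decomposition $\HH_{\Nmu} = \bigoplus_L \HH^{(L)}$, indexed by finite subsets $L \subset \{k : k^2 > \mu\}$, where $\HH^{(L)}$ is spanned by Slater determinants whose occupied momenta above the Fermi surface form the fixed set $L$, with $|L|$ holes varying inside $B$. Since both $a(\eta_\mu)$ and $a^*(\eta_\mu)$ only remove or insert momenta in $B$, each sector is invariant under $a^*(\eta_\mu) K^{-1} a(\eta_\mu)$. On any non-trivial sector, parametrized by hole configurations $K \subset B$ with $|K| = |L|$, a careful bookkeeping of fermionic signs shows that the matrix of $(T - \lambda) - a^*(\eta_\mu) K^{-1} a(\eta_\mu)$ in the Slater basis takes the form $D - L^{-2} \mathcal{L}_W$, where $D$ is the diagonal of kinetic energies $E_{S(K)} - \lambda$ and $\mathcal{L}_W \geq 0$ is a weighted graph Laplacian with edge weights $\bigl( G_\mu(E_{S(K)} - q^2 - \lambda) - r_{\!\mu,\lambda} \bigr)^{-1}$. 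The key input to verify $D \geq L^{-2} \mathcal{L}_W$ is a shifted version of \eqref{Perturbed_Polaron_Equation}: the scalar function $\lambda' \mapsto E_0(\mu) - \lambda' - L^{-2} \sum_{q \in B} (G_\mu(E_0(\mu) - q^2 - \lambda') - r_{\!\mu,\lambda})^{-1}$ is strictly decreasing and vanishes at the solution of \eqref{Perturbed_Polaron_Equation}, so the strictly positive particle-hole excitation energy $E_{S(K)} - E_0(\mu) > 0$ translates into strict row-sum dominance of $D$ over $L^{-2} \mathcal{L}_W$.

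The hard part will be promoting this row-sum information into a genuine operator lower bound on the higher sectors, since a naive diagonal-dominance estimate loses a factor of two and is too crude when the particle-hole excitation energy is small compared to $|e_P(\mu)|$. I expect the necessary sharpening to exploit that $\mathcal{L}_W$ annihilates the constant vector $\mathbf{1}$ on each sector, so the Laplacian costs nothing in that direction; a Schur-complement or rearrangement argument should then let the excess in $D$ supplied by the excitation energy absorb the Laplacian cost in directions orthogonal to constants.
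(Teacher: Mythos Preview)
Your opening is correct and matches the paper: the Birman--Schwinger identity \eqref{Formal_Birman_Schwinger_Argument} reduces the lemma to proving $\mathcal{F}_\mu(\lambda) := T - \lambda - a^*(\eta_\mu)\,K^{-1}\,a(\eta_\mu) \geq 0$ on $\HH_{\Nmu}$, and your verification that equality holds on $\ket{\text{FS}_\mu}$ is exactly the role of \eqref{Perturbed_Polaron_Equation}. The gap is in the second half. Your particle--hole sector analysis with a ``weighted graph Laplacian'' is not how the paper proceeds, and you yourself flag that the argument is unfinished: promoting row--sum information to an operator inequality is precisely the hard step, and there is no reason to expect a Schur--complement trick to close it without further input. (There is also a concern that the fermionic signs in the off--diagonal matrix elements of $a^*(\eta_\mu)K^{-1}a(\eta_\mu)$ do not organize into non--negative edge weights, so the object you call $\mathcal{L}_W$ may fail to be a genuine Laplacian.)

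The paper avoids the sector analysis entirely by anti--normal ordering $a^*(\eta_\mu)K^{-1}a(\eta_\mu)$ via the pull--through formula \eqref{Pull_Through_Formula}. After a regularization $G_\mu \to G_\mu^{(n)}$, this yields
\[
\mathcal{F}_\mu^{(n)}(\lambda) \;=\; \Bigl[\,T - \lambda - \tfrac{1}{L^2}\sum_{k^2\le\mu}\bigl(G_\mu^{(n)}(T-k^2-\lambda)-r_{\mu,\lambda}\bigr)^{-1}\Bigr] \;+\; \mathcal{P}_\mu^{(n)}(\lambda),
\]
where $\mathcal{P}_\mu^{(n)}(\lambda) = L^{-2}\sum_{k^2,l^2\le\mu} a_k\,(G_\mu^{(n)}(T-k^2-l^2-\lambda)-r_{\mu,\lambda})^{-1}\,a_l^*$. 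The bracket is a monotone function of $T$ alone; since $T\ge E_0(\mu)$ on $\HH_{\Nmu}$ and it vanishes at $T=E_0(\mu)$ by \eqref{Perturbed_Polaron_Equation}, it is non--negative. The crux is then $\mathcal{P}_\mu^{(n)}(\lambda)\ge 0$, which the paper proves by writing $(G_\mu^{(n)}(\cdot)-r_{\mu,\lambda})^{-1}$ as $\int_0^\infty e^{-t(\cdot)}\,dt$, expanding the exponential of the finite sum defining $G_\mu^{(n)}$ into a product of exponential series, and recognizing each resulting term as a squared norm. This positivity argument (an analogue of \eqref{Easy_Positivity_Argument} but for the more complicated function $(G_\mu^{(n)}-r)^{-1}$) is the missing idea in your proposal; once you have it, the sector--by--sector Laplacian bookkeeping becomes unnecessary.
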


\begin{remark}
 Neglecting $r_{\!\mu,\lambda}$, \eqref{Perturbed_Polaron_Equation} reduces to the polaron equation \eqref{Polaron_Equation}. For this reason, we refer to 
\eqref{Perturbed_Polaron_Equation} as \textit{perturbed polaron equation}.
\end{remark}

\begin{proof}
Since $T \geq E_0(\mu) - \mu$ on $\HH_{\Nmu-1}$ and $\lambda \leq E_0(\mu) + e_P(\mu)$, it follows that $G_\mu(T - \lambda) \geq G_\mu(- \mu - e_P(\mu))$ on 
$\HH_{\Nmu-1}$, which is strictly larger than $r_{\!\mu,\lambda}$ by assumption. Thus by \eqref{Formal_Birman_Schwinger_Argument},
\begin{equation} \label{Phi_F_Bound}
   \lbphi_\mu(\lambda) \geq a(\eta_\mu) \: \frac{1}{T-z} \: \mathcal{F}_\mu(\lambda) \: \frac{1}{T-z} \: a^*(\eta_\mu) 
\: \rst \HH_{\Nmu - 1},
\end{equation}
where
$$
   \mathcal{F}_\mu(\lambda) = \left( T - \lambda - a^*(\eta_\mu) \: (G_\mu(T - \lambda) - r_{\!\mu,\lambda})^{-1} \:
a(\eta_\mu) \right) \rst 
\HH_{\Nmu}
$$
is a self-adjoint operator with $D(\mathcal{F}_\mu(\lambda)) = D(T \rst \HH_{\Nmu})$. For the derivation of a lower bound we 
approximate $\mathcal{F}_\mu(\lambda)$ by $\mathcal{F}_\mu^{(n)}(\lambda)$. This operator arises from $\mathcal{F}_\mu(\lambda)$ by replacing the 
function $G_\mu$ by $G_\mu^{(n)}$ with
$$
   G_\mu^{(n)}(\tau) = \frac{1}{L^2} \sum_{k^2 \leq n} \left( \frac{1}{k^2 - E_B} - 
\frac{\chi_{(\mu,\infty)}(k^2)}{k^2 + \tau} \right).
$$
Note that $G_\mu^{(n)}(\tau) \to G_\mu(\tau)$ as $n \to \infty$ for every $\tau > -\mu$. Thus, $G_\mu^{(n)}(T-\lambda) \psi \to G_\mu(T-\lambda) \psi$ as $n 
\to \infty$ for every antisymmetric product state $\psi \in \HH_{\Nmu - 1}$ of plane waves. Since these states form a total set of eigenstates of 
$G_\mu(T-\lambda)$ on $\HH_{\Nmu-1}$, the set $D \subseteq \HH_{\Nmu-1}$ of their finite linear combinations is a domain of essential self-adjointness for this 
operator. Furthermore,  $G_\mu^{(n)}(T-\lambda) \geq G_\mu^{(n)}(-\mu - e_P(\mu))$. By the convergence of $G_\mu^{(n)}$ and the assumption $G_\mu(-\mu-e_P(\mu)) 
> r_{\!\mu,\lambda}$, there is $\eps > 0$ such that $G_\mu^{(n)}(T-\lambda) - r_{\!\mu,\lambda} > \eps$ for $n$ large enough. Hence as $n \to \infty$, 
$(G_\mu^{(n)}(T-\lambda) - r_{\!\mu,\lambda})^{-1} \to (G_\mu(T-\lambda) - r_{\!\mu,\lambda})^{-1}$ and $\mathcal{F}_\mu^{(n)}(\lambda) \to 
\mathcal{F}_\mu(\lambda)$ strongly.

Using the pull-through formula \eqref{Pull_Through_Formula}, we rewrite $\mathcal{F}_\mu^{(n)}(\lambda)$ as
\begin{equation}
   T - \lambda - \frac{1}{L^2} \sum_{k^2 \leq \mu} (G_\mu^{(n)}(T - k^2 - \lambda) - r_{\!\mu,\lambda})^{-1}
   + \frac{1}{L^2} \sum_{k^2,l^2 \leq \mu} \!\! a_k (G_\mu^{(n)}(T - k^2 - l^2 - \lambda) - r_{\!\mu,\lambda})^{-1} a_l^* \label{T_lambda_n}
\end{equation}
on $\HH_{\Nmu}$. Assuming that the last term in \eqref{T_lambda_n}, which we call $\mathcal{P}_\mu^{(n)}(\lambda)$ in the following, is a positive operator on 
$\HH_{\Nmu}$, we obtain 
$$
   \mathcal{F}_\mu^{(n)}(\lambda) \geq E_0(\mu) - \lambda - \frac{1}{L^2} \sum_{k^2 \leq \mu} (G_\mu^{(n)}(E_0(\mu) - k^2 - \lambda) - r_{\!\mu,\lambda})^{-1},
$$
since $T \geq E_0(\mu)$ on $\HH_{\Nmu}$. In view of \eqref{Phi_F_Bound}, this lower bound completes the proof, since it converges to the left hand side of 
\eqref{Perturbed_Polaron_Equation} as $n \to \infty$.

It remains to prove $\mathcal{P}_\mu^{(n)}(\lambda) \geq 0$. For $\psi \in \HH_{\Nmu}$,
\begin{align*}
   L^2 \cdot \sprod{\psi}{\mathcal{P}_\mu^{(n)}(\lambda) \psi} 
   &= \int\limits_0^\infty dt \sum_{k^2,l^2 \leq \mu} \sprod{\psi}{a_k \: \exp(-t \cdot [ G_\mu^{(n)}(T - k^2 - l^2 - 
\lambda) - r_{\!\mu,\lambda}]) \: a_l^* \psi} \\
   &= \int\limits_0^\infty dt \: \exp(-t [L^{-2} \sum_{p^2 \leq n} \tfrac{1}{p^2 - E_B} \!-\! r_{\!\mu,\lambda}]) \cdot \mathcal{I}^{(n)}(t)
\end{align*}
with
$$
   \mathcal{I}^{(n)}(t) := \sum_{k^2,l^2 \leq \mu} \sprod{\psi}{a_k 
\prod\limits_{\mu < q^2 \leq n} \!\! \exp(t (q^2 + T - k^2 - l^2 - \lambda)^{-1}) \:\, a_l^* \psi}.
$$
We show that $\mathcal{I}^{(n)}(t) \geq 0$ for all $t \in [0,\infty)$ and $n \in \N$. Note that the product in the definition of $\mathcal{I}^{(n)}(t)$ has 
only finitely many factors, because $A_n := \{ q \in \frac{2\pi}{L} \Z^2 \: | \: \mu < q^2 \leq n \}$ is a finite set. 
We consider the exponential series and obtain
$$
   \mathcal{I}^{(n)}(t) = \sum_{k^2,l^2 \leq \mu} \sprod{\psi}{a_k \prod\limits_{q \in A_n} \!\! \left( \sum\limits_{m=0}^\infty \frac{t^m}{m!} \cdot 
\frac{1}{(q^2 + T - k^2 - l^2 - \lambda)^m} \right) a_l^* \psi}.
$$
By the absolute convergence of the exponential series, we can rearrange the product of series to get
$$
   \mathcal{I}^{(n)}(t) = \sum_{m: A_n \to \N_0} \:\: \sum_{k^2,l^2 \leq \mu} \sprod{\psi}{a_k \prod\limits_{q \in A_n} \!\! \left( \frac{t^{m(q)}}{m(q)!} 
\cdot 
\frac{1}{(q^2 + T - k^2 - l^2 - \lambda)^{m(q)}} \right) a_l^* \psi},
$$
where we sum over all $\N_0$-valued functions $m$ on the finite set $A_n$. Note that the factor in parantheses indexed 
by $q$ is equal to $1$ if $m(q) = 0$. For all factors with $m(q) \neq 0$, we use the identity
$$
   \frac{1}{a^\tau} = \frac{1}{c_\tau} \cdot \int\limits_0^\infty \! ds \: e^{-a s^{1/\tau}} \qquad \text{with} \qquad 
c_\tau := 
\int\limits_0^\infty \! ds \: e^{- s^{1/\tau}}
$$
for $a,\tau > 0$ to rewrite each of the summands in the $m$-sum as
\begin{align*}
 &\sum_{k^2,l^2 \leq \mu} \:\: \prod\limits_{\substack{q \in A_n \\ m(q) \neq 0}} \: \left( \frac{c_{m(q)} t^{m(q)}}{m(q)!} 
\int\limits_0^\infty \!ds_q \right) \sprod{\psi}{a_k \prod\limits_{\substack{p \in A_n \\ m(p) \neq 0}} \!\! e^{-(p^2 + T - k^2 - l^2 - \lambda)s_p^{1/m(p)}} 
a_l^* \psi} \\
 &=\prod\limits_{{\substack{q \in A_n \\ m(q) \neq 0}}} \left( \frac{c_{m(q)} t^{m(q)}}{m(q)!} \int\limits_0^\infty \!ds_q \right) \norm{\sum\limits_{k^2 \leq 
\mu} \prod\limits_{{\substack{p \in A_n \\ m(p) \neq 0}}} e^{-\frac{1}{2}(p^2 + T - 2k^2 - \lambda)s_p^{1/m(p)}} a_k^* \psi }^2.
\end{align*}
This yields $\mathcal{I}^{(n)}(t) \geq 0$. Thus, $\sprod{\psi}{\mathcal{P}_\mu^{(n)}(\lambda) \psi} \geq 0$ for all $n \in \N$.
\end{proof}

To summarize Section \ref{Sect:Birman_Schwinger}, we combine \eqref{Upper_Lower_Bound_Criterion}, Lemma \ref{Lower_Bound_Phi} and Lemma 
\ref{Second_Birman_Schwinger} to obtain the following statement. Recall that equality on one side of \eqref{Upper_Lower_Bound_Criterion} implies equality on 
both sides.

\begin{cor} \label{Summarizing_Lemma}
Suppose that $\lambda(\mu) \leq E_0(\mu) + e_P(\mu)$ satisfies the perturbed polaron equation 
\eqref{Perturbed_Polaron_Equation} and assume $G_\mu(-\mu-e_P(\mu)) > r_{\!\mu,\lambda}$. Then, $H_\mu \geq 
\lambda(\mu)$.
\end{cor}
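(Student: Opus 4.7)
The statement is presented as a direct consequence of the three inputs gathered just above it: the Birman--Schwinger criterion \eqref{Upper_Lower_Bound_Criterion}, Lemma \ref{Lower_Bound_Phi}, and Lemma \ref{Second_Birman_Schwinger}. My plan is therefore a short chaining argument rather than a fresh computation.

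First I would verify that $\lambda(\mu)$ lies in the admissible range of the Birman--Schwinger criterion. Since $e_P(\mu) < 0$, the hypothesis $\lambda(\mu) \leq E_0(\mu) + e_P(\mu)$ gives $\lambda(\mu) < E_0(\mu)$, so that $\phi_\mu(\lambda(\mu))$ is defined as an element of the analytic family of type (A) from Section \ref{Sect:Definition}, and the equivalence in \eqref{Upper_Lower_Bound_Criterion} applies at $\lambda(\mu)$.

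Next I would apply Lemma \ref{Lower_Bound_Phi}, whose hypothesis $\lambda \leq E_0(\mu) + e_P(\mu)$ is exactly the one we assume, to obtain $\phi_\mu(\lambda(\mu)) \geq \lbphi_\mu(\lambda(\mu))$ on $\HH_{\Nmu - 1}$. Then, using the additional assumption $G_\mu(-\mu - e_P(\mu)) > r_{\!\mu,\lambda}$ and the fact that $\lambda(\mu)$ solves the perturbed polaron equation \eqref{Perturbed_Polaron_Equation}, Lemma \ref{Second_Birman_Schwinger} yields $\lbphi_\mu(\lambda(\mu)) \geq 0$. Combining these two inequalities gives $\phi_\mu(\lambda(\mu)) \geq 0$, hence $\inf\sigma(\phi_\mu(\lambda(\mu))) \geq 0$.

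Finally I would invoke \eqref{Upper_Lower_Bound_Criterion} in its contrapositive form, using the strengthening that equality on one side implies equality on both: the strict inequality $\inf\sigma(\phi_\mu(\lambda(\mu))) < 0$ would force $E(\mu) < \lambda(\mu)$, while equality of the infimum to $0$ would force $E(\mu) = \lambda(\mu)$. Both cases are compatible with $\inf\sigma(\phi_\mu(\lambda(\mu))) \geq 0$ only if $E(\mu) \geq \lambda(\mu)$, which is the desired bound $H_\mu \geq \lambda(\mu)$. The only real care is in bookkeeping the direction of the equivalence and in confirming that the interval of $\lambda$ on which each ingredient is valid contains $\lambda(\mu)$; no new analytic input is needed.
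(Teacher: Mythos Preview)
Your proposal is correct and follows exactly the approach the paper intends: the corollary is stated without a separate proof, merely as the combination of \eqref{Upper_Lower_Bound_Criterion}, Lemma \ref{Lower_Bound_Phi}, and Lemma \ref{Second_Birman_Schwinger}, together with the remark that equality on one side of \eqref{Upper_Lower_Bound_Criterion} implies equality on the other. You have simply spelled out this chaining argument in full, including the case distinction between $\inf\sigma(\phi_\mu(\lambda(\mu))) > 0$ and $= 0$, which is precisely what the paper's one-line justification encodes.
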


\section{High density limit}
\label{Sect:High_Density_Limit}

The main result of this section is the following lemma, which, together with Corollary \ref{Summarizing_Lemma}, concludes the proof of 
Theorem \ref{Main_Theorem}. For notational convenience we set $\widetilde{\mu} := \mu/|E_B|$ and $\widetilde{L} := L \cdot \sqrt{|E_B|}$ throughout this 
section.

\begin{lemma}\label{Lemma_perturberd_polaron_bound}\leavevmode
\begin{enumerate}[label=(\alph*)]
 \item If $\widetilde{\mu}$ is large enough, then $G_\mu(-\mu-e_P(\mu)) > r_{\!\mu,\lambda}$ for all $\lambda \leq E_0(\mu) + e_P(\mu)$ and $\widetilde{L} \geq 1$.
 \item If $\widetilde{\mu}$ is large enough and $\widetilde{L} \geq 1$, there exists a unique solution $\lambda(\mu) \leq E_0(\mu) + e_P(\mu)$ to the perturbed
polaron equation \eqref{Perturbed_Polaron_Equation}. Moreover, there is a constant $C > 0$ satisfying the following property. If $\widetilde{\mu}$ is large 
enough, then
 \begin{align*}
   E_0(\mu) + e_P(\mu) -\lambda(\mu) \leq C \cdot \frac{|e_P(\mu)|}{\sqrt{\log \widetilde{\mu}}}
\end{align*}
 for all $\widetilde{L} \geq 1$.
\end{enumerate}
\end{lemma}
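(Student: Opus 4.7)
The plan is to prove both parts by asymptotic analysis as $\widetilde{\mu}\to\infty$, uniformly in $\widetilde{L}\geq 1$: treat the perturbed polaron equation \eqref{Perturbed_Polaron_Equation} as a small perturbation of the polaron equation \eqref{Polaron_Equation}, the smallness coming from the ratio $r_{\!\mu,\lambda}/G_\mu(-\mu-e_P(\mu))\to 0$.

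For part (a), I would first derive a lower bound $G_\mu(-\mu-e_P(\mu))\geq c\log\widetilde{\mu}$ by approximating the defining lattice sums of $G_\mu$ with 2D momentum integrals; the assumption $\widetilde{L}\geq 1$ ensures the discretization error is controlled by a constant, and the asymptotics $e_P(\mu)\sim -\mu/\log\widetilde{\mu}$ from Lemma \ref{Lemma_bound_for_e_p} handles the $e_P$-dependence. For the upper bound on $r_{\!\mu,\lambda}$, the key step is to apply the pull-through formula \eqref{Pull_Through_Formula} combined with the identity
\begin{equation*}
  \sum_{q^2\leq\mu} a_q (T+k^2-\lambda)^{-1} \;=\; L\,a(\eta_\mu)\,(T+k^2-\lambda)^{-1},
\end{equation*}
from which one derives
\begin{equation*}
 a(\eta_\mu)\,A_{\mu,\lambda}^*\,\psi \;=\; -\frac{1}{L}\sum_{k^2>\mu} a_k^*\, a(\eta_\mu)\,(T+k^2-\lambda)^{-1}\,\psi.
\end{equation*}
Setting $u_k := a(\eta_\mu)(T+k^2-\lambda)^{-1}\psi$ and expanding $\norm{\sum_k a_k^*u_k}^2$, the diagonal contribution is controlled via
\begin{equation*}
  \sum_{k^2>\mu}\norm{u_k}^2 \;\leq\; \norm{\eta_\mu}^2 \sum_{k^2>\mu}\frac{\norm{\psi}^2}{(k^2-\mu-e_P(\mu))^2}\;\lesssim\; \frac{\rho\,L^2}{|e_P(\mu)|}\,\norm{\psi}^2,
\end{equation*}
contributing $\sim (\rho/|e_P(\mu)|)\norm{\psi}^2\sim\log\widetilde{\mu}\,\norm{\psi}^2$ to $\norm{a(\eta_\mu)A_{\mu,\lambda}^*\psi}^2$, hence $r_{\!\mu,\lambda}\lesssim\sqrt{\log\widetilde{\mu}}$. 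The main obstacle is to control the off-diagonal piece $\sum_{k\neq k'}\sprod{a_k u_{k'}}{a_{k'}u_k}$ uniformly in $L$; for this one exploits that $a_k, a_{k'}$ with $k^2,k'^2>\mu$ anti-commute with every annihilation operator in $a(\eta_\mu)$, and applies the pull-through identities once more to reduce each off-diagonal term to an expression of the same order as the diagonal one. Combining with the lower bound on $G_\mu$, one arrives at $r_{\!\mu,\lambda}/G_\mu(-\mu-e_P(\mu))\lesssim 1/\sqrt{\log\widetilde{\mu}}\to 0$, proving (a).

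For part (b), set $\alpha := E_0(\mu)-\lambda$ and study
\begin{equation*}
  F(\alpha) \;:=\; \alpha - \frac{1}{L^2}\sum_{k^2\leq\mu}\big(G_\mu(\alpha-k^2) - r_{\!\mu,\, E_0(\mu)-\alpha}\big)^{-1}
\end{equation*}
on the range $\alpha\geq -e_P(\mu)$. Part (a) ensures the inverses in this definition exist. Monotonicity of $G_\mu$ together with the analytic dependence of $r_{\!\mu,\lambda}$ on $\lambda$ makes $F$ strictly increasing, hence any zero is unique; existence of a zero follows from a sign change: at $\alpha = -e_P(\mu)$ the polaron equation \eqref{Polaron_Equation} and $r>0$ give $F(-e_P(\mu))<0$, while $F(\alpha)\to\infty$ as $\alpha\to\infty$. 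To quantify $\delta := \alpha(\mu) + e_P(\mu) \geq 0$, I would subtract \eqref{Polaron_Equation} from \eqref{Perturbed_Polaron_Equation}, obtaining
\begin{equation*}
  \delta \;=\; \frac{1}{L^2}\sum_{k^2\leq\mu}\Big[\big(G_\mu(-e_P(\mu)-k^2+\delta)-r_{\!\mu,\lambda(\mu)}\big)^{-1} - G_\mu(-e_P(\mu)-k^2)^{-1}\Big].
\end{equation*}
A first-order Taylor expansion in $\delta$ and $r$ (justified since $r/G\to 0$ by (a) and $\delta$ turns out small) gives $\delta\big(1 + \mathcal{O}(1/\log\widetilde{\mu})\big) \lesssim L^{-2}\sum_{k^2\leq\mu} r_{\!\mu,\lambda(\mu)} / G_\mu^2 \sim r\cdot \rho / (\log\widetilde{\mu})^2$. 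Inserting $r\lesssim\sqrt{\log\widetilde{\mu}}$ and $\rho\sim\mu$ yields $\delta\lesssim \mu/(\log\widetilde{\mu})^{3/2}$, which combined with $|e_P(\mu)|\sim\mu/\log\widetilde{\mu}$ gives the desired estimate $\delta\lesssim |e_P(\mu)|/\sqrt{\log\widetilde{\mu}}$. The remaining subtlety is to rigorously control the second-order Taylor remainder, which is handled via the monotonicity of $G_\mu$ and the smallness of $r/G$.
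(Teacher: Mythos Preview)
Your overall strategy is correct, but both parts are unnecessarily complicated relative to the paper's proof, and part (a) contains a genuine gap.

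\textbf{Part (a).} You commute $a(\eta_\mu)$ through $A_{\mu,\lambda}^*$, expand the squared norm, and then face off-diagonal terms that you call ``the main obstacle''; your handling of these is only a hint (``apply pull-through once more''), not an argument, and it is not obvious that the off-diagonal sum is of the same order as the diagonal one. The paper sidesteps this entirely by the trivial estimate
\[
   r_{\mu,\lambda} \;=\; 2\,\norm{a(\eta_\mu)\,A_{\mu,\lambda}^*} \;\leq\; 2\,\norm{\eta_\mu}\,\norm{A_{\mu,\lambda}^*}.
\]
Then $\norm{\eta_\mu}^2 = L^{-2}\sum_{k^2\leq\mu}1 \lesssim \mu$, and $\norm{A_{\mu,\lambda}^*}^2 = \norm{A_{\mu,\lambda}A_{\mu,\lambda}^*}$ is computed via pull-through: the resulting expression has a \emph{negative} off-diagonal part (by the same positivity argument as \eqref{Easy_Positivity_Argument}), so it is simply dropped, leaving $\norm{A_{\mu,\lambda}^*}^2 \leq L^{-2}\sum_{k^2>\mu}(k^2-\mu-e_P(\mu))^{-2} \lesssim |e_P(\mu)|^{-1}$. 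This gives $r_{\mu,\lambda}\lesssim\sqrt{\mu/|e_P(\mu)|}\sim\sqrt{\log\widetilde{\mu}}$ with no off-diagonal analysis needed. Your diagonal estimate is exactly this bound; the detour through the off-diagonal was avoidable.

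\textbf{Part (b).} Your Taylor expansion in $\delta$ is circular: you assume $\delta$ small to justify the expansion that shows $\delta$ small. This can be repaired by a bootstrap, but the paper's argument is cleaner and avoids it. Since $\lambda(\mu)\leq E_0(\mu)+e_P(\mu)$, monotonicity of $G_\mu$ gives $G_\mu(E_0(\mu)-k^2-\lambda(\mu))\geq G_\mu(-k^2-e_P(\mu))$ \emph{exactly}, so from \eqref{Perturbed_Polaron_Equation} one gets the exact inequality
\[
   E_0(\mu)-\lambda(\mu) \;\leq\; \frac{1}{L^2}\sum_{k^2\leq\mu}\big(G_\mu(-k^2-e_P(\mu))-r_{\mu,\lambda(\mu)}\big)^{-1}.
\]
Subtracting \eqref{Polaron_Equation}, using $G_\mu(-k^2-e_P(\mu))\geq G_\mu(-\mu-e_P(\mu))$, and applying \eqref{Polaron_Equation} once more yields
\[
   E_0(\mu)+e_P(\mu)-\lambda(\mu)\;\leq\; |e_P(\mu)|\cdot\frac{r_{\mu,\lambda(\mu)}}{G_\mu(-\mu-e_P(\mu))-r_{\mu,\lambda(\mu)}},
\]
from which the claim follows directly by the bounds on $r$ and $G_\mu$. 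No expansion, no remainder to control.
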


\noindent Before proving Lemma \ref{Lemma_perturberd_polaron_bound}, we state and prove two preparatory lemmas.

\begin{lemma} \label{Lemma_Bound_G_mu}
   Let $\tau > - \mu$ and set $\widetilde{\tau} := \tau/|E_B|$. Then for all $L > 0$,
\begin{equation} \label{Bound_G_mu}
  \left| G_\mu(\tau) - \frac{1}{4 \pi} \log \left( \widetilde{\mu} + \widetilde{\tau} \right) \right|
   \leq \frac{K(\widetilde{\tau}, \widetilde{\mu}, \widetilde{L})}{\widetilde{L}},
\end{equation}
where the function on the right hand side is given by
\begin{align}
   K(\widetilde{\tau}, \widetilde{\mu}, \widetilde{L}) :=  1 + \frac{3}{\widetilde{L}} + \frac{1}{\sqrt{\widetilde{\mu} + 
\min\{\widetilde{\tau},1\}}} + \left( \frac{4 \sqrt{\widetilde{\mu}}}{\pi} + \frac{6}{\widetilde{L}} \right) \frac{1}{\widetilde{\mu} + 
\min\{\widetilde{\tau},1\}}. \label{Definition_K_mu}
\end{align}
\end{lemma}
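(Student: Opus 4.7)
The plan is to treat $G_\mu(\tau)$ as a Riemann sum for an integral over $\R^2$ and to bound the discrepancy cell-by-cell. First I would rescale: substituting $k=|E_B|^{1/2}\widetilde k$ turns the sum over $\frac{2\pi}{L}\Z^2$ into one over $\frac{2\pi}{\widetilde L}\Z^2$ and absorbs all powers of $|E_B|$, reducing the problem to the case $|E_B|=1$ with $\mu,\tau,L$ replaced throughout by $\widetilde\mu,\widetilde\tau,\widetilde L$. A direct polar-coordinate integration then gives
\begin{equation*}
\frac{1}{(2\pi)^2}\int_{\R^2}\left(\frac{1}{k^2+1}-\frac{\chi_{(\widetilde\mu,\infty)}(k^2)}{k^2+\widetilde\tau}\right)dk=\frac{1}{4\pi}\log(\widetilde\mu+\widetilde\tau),
\end{equation*}
so the task reduces to bounding the Riemann sum error between the rescaled sum and this integral.

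Second, I would assign to each $k\in\frac{2\pi}{\widetilde L}\Z^2$ the cell $C_k:=k+[-\pi/\widetilde L,\pi/\widetilde L)^2$ of area $(2\pi/\widetilde L)^2$, observe that
\begin{equation*}
G_\mu(\tau)-\frac{1}{4\pi}\log(\widetilde\mu+\widetilde\tau)=\frac{1}{(2\pi)^2}\sum_k\int_{C_k}\bigl(h(k)-h(\xi)\bigr)d\xi
\end{equation*}
with $h(\xi):=(\xi^2+1)^{-1}-\chi_{\xi^2>\widetilde\mu}(\xi^2+\widetilde\tau)^{-1}$, and split the cells into smooth ones lying entirely in $\{\xi^2<\widetilde\mu\}$ or $\{\xi^2>\widetilde\mu\}$ versus boundary cells meeting the Fermi circle $\{\xi^2=\widetilde\mu\}$.

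For the smooth cells I would use the mean-value bound $|h(k)-h(\xi)|\le(\sqrt 2\,\pi/\widetilde L)\sup_{C_k}|\nabla h|$ together with the explicit formula $|\nabla(\xi^2+1)^{-1}|=2|\xi|(\xi^2+1)^{-2}$ on the disk and the analogous estimate $|\nabla h(\xi)|\lesssim|\xi|(\xi^2+\widetilde\tau)^{-2}$ outside; integrating these gradients in polar coordinates produces the leading $1/\widetilde L$ term in $K$ together with the $1/\sqrt{\widetilde\mu+\min\{\widetilde\tau,1\}}$ and $6/[\widetilde L(\widetilde\mu+\min\{\widetilde\tau,1\})]$ tails. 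For the boundary cells I would use the cruder bound $|h(k)-h(\xi)|\le 2\sup_{C_k}|h|\le 2(\widetilde\mu+\min\{\widetilde\tau,1\})^{-1}$; a sharp count of the cells hit by the Fermi circle, based on the number of vertical and horizontal grid lines the circle crosses (at most twice each), bounds their number by $4\sqrt{\widetilde\mu}\,\widetilde L/\pi+O(1)$, which upon summing yields the $4\sqrt{\widetilde\mu}/[\pi\widetilde L(\widetilde\mu+\min\{\widetilde\tau,1\})]$ contribution.

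The main obstacle is bookkeeping the explicit constants in $K$: (i) obtaining $\min\{\widetilde\tau,1\}$ rather than $\widetilde\tau$ requires separating the cases $\widetilde\tau\le 1$ and $\widetilde\tau>1$, since in the first case $h$ can spike just outside the Fermi circle while in the second $(\widetilde\mu+1)^{-1}$ governs; (ii) the coefficient $4/\pi$ requires the sharp edge-crossing count rather than the naive $\widetilde L\sqrt{\widetilde\mu}$ cell count; and (iii) the estimate must remain uniform as $\widetilde\tau$ approaches $-\widetilde\mu$, where the logarithm diverges but the boundary-cell estimate stays finite through the $\min$ cutoff.
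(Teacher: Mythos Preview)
Your approach is sound and would yield a bound of the correct form, but it is genuinely different from the paper's route. The paper does not do a direct two-dimensional cell-by-cell analysis with gradient estimates and a Fermi-circle cell count. Instead it exploits radial symmetry to reduce everything to a one-dimensional Riemann-sum comparison: it writes
\[
G_\mu(\tau)=\frac{1}{L^2}\sum_k f_\mu(k^2)+\sgn(\tau+E_B)\,\frac{1}{L^2}\sum_{k^2>\mu} g_\mu(k^2)
\]
with the nonnegative, monotone decreasing radial functions $f_\mu(s)=\chi_{[0,\mu]}(s)/(s-E_B)$ and $g_\mu(s)=|\tau+E_B|/((s-E_B)(s+\tau))$, and then applies a general lemma (Appendix~B) of the type
\[
\Bigl|\,\frac{1}{L^2}\sum_k F(k^2)-\frac{1}{2\pi}\int_0^\infty F(t^2)\,t\,dt\,\Bigr|\le \frac{2}{\pi L}\int_0^\infty F(t^2)\,dt+\frac{3F(0)}{L^2}
\]
for monotone decreasing $F\ge 0$, together with a companion version for sums over $k^2>m$. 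The two one-dimensional integrals on the right are then estimated elementarily (e.g.\ $\int_{\sqrt\mu}^\infty g_\mu(t^2)\,dt\le \tfrac{\pi}{2}(\mu+\min\{\tau,|E_B|\})^{-1/2}$), and this is exactly where the constants $1$, $3/\widetilde L$, $1/\sqrt{\widetilde\mu+\min\{\widetilde\tau,1\}}$, $4\sqrt{\widetilde\mu}/\pi$ and $6/\widetilde L$ come from.

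What each approach buys: the paper's radial reduction avoids gradients altogether and handles the Fermi-circle boundary automatically through the $\chi_{[0,\mu]}$ cutoff and the $k^2>m$ version of the lemma; it gives the stated constants on the nose with almost no bookkeeping. Your two-dimensional method is more flexible (it would work for non-radial summands) but, as you already anticipate, matching the precise constants in $K$ is the hard part --- for instance your mean-value estimate picks up a factor $\sqrt2\,\pi$ from the cell diameter, and your boundary-cell bound $2\sup|h|$ times the edge-crossing count tends to produce $8\sqrt{\widetilde\mu}/(\pi\widetilde L)$ rather than $4\sqrt{\widetilde\mu}/(\pi\widetilde L)$. None of this is fatal for the application (only uniform boundedness of $K$ for $\widetilde\mu,\widetilde L\ge 1$ is used downstream), but if you need the lemma exactly as stated, the radial-monotone route is considerably cleaner.
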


\begin{proof}
For $ \tau >-\mu$, we write
\begin{equation*}
   G_\mu(\tau) = \frac{1}{L^2} \sum_{k} f_\mu(k^2)  + \sgn(\tau +E_B) \frac{1}{L^2} \sum_{k^2> \mu}  g_\mu(k^2),
\end{equation*}
where $ \sgn(\tau +E_B)$ denotes the sign of $\tau+E_B$, and
\begin{align*}
   f_\mu(k^2):= \frac{\chi_{[0,\mu]}(k^2)}{k^2 - E_B}, \hspace{1cm}
   g_\mu(k^2):= \frac{\vert \tau + E_B\vert }{(k^2 - E_B)(k^2 + \tau)},
\end{align*}
are non-negative, monotonically decreasing functions. Observe that
\begin{align*}
\frac{1}{4\pi} \log 
\left( \frac{\mu + \tau}{|E_B|}  \right) = \frac{1}{2\pi}\int\limits_0^{\infty}  f_\mu(t^2) t \; dt + \frac{\sgn(\tau +E_B)}{2\pi} 
\int\limits_{\sqrt{\mu}}^\infty  g_\mu(t^2)t\; dt.
\end{align*}
Thus, the left hand side of \eqref{Bound_G_mu} can be estimated by the sum of
\begin{align}
\left| \frac{1}{L^2} \sum_{k} f_\mu(k^2)  -  \frac{1}{2\pi}\int\limits_0^{\infty}  f_\mu(t^2) t \; dt \right| \le \frac{2}{\pi L}   \int\limits_0^{\infty}  
f_\mu(t^2) \; dt  + \frac{3 f_\mu(0)}{L^2 }, \label{G_mu_Error_a}
\end{align}
and
\begin{align}
   \left| \frac{1}{L^2} \sum_{k^2>\mu} g_\mu(k^2)  -  \frac{1}{2\pi}\int\limits_{\sqrt{\mu}}^{\infty}  g_\mu(t^2) t \; dt \right|
   \leq \frac{2}{\pi L} \int\limits_{\sqrt{\mu}}^\infty g_\mu(t^2) \; dt + \left(\frac{4\sqrt{\mu}}{\pi L} + \frac{6}{L^2}\right) g_\mu(\mu), 
\label{G_mu_Error_b}
\end{align}
where we use Lemma \ref{Riemann} in \eqref{G_mu_Error_a} and \eqref{G_mu_Error_b}. One concludes the proof by estimating 
the error terms 
in \eqref{G_mu_Error_a} and \eqref{G_mu_Error_b}, which leads to the bound $K(\widetilde{\tau}, \widetilde{\mu}, \widetilde{L})/\widetilde{L}$ in 
\eqref{Bound_G_mu}. For the error term in \eqref{G_mu_Error_a}, we use
\begin{align*}
  \int\limits_0^{\infty} f_\mu(t^2)\; dt \leq \int\limits_0^\infty \: \frac{1}{t^2 - E_B}\; dt = \frac{\pi}{2} \cdot \frac{1}{\sqrt{|E_B|}}.
\end{align*}
For the error term in \eqref{G_mu_Error_b}, consider
$$
    g_\mu(t^2) = \sgn(\tau + E_B) \cdot \left( \frac{1}{t^2 - E_B} - \frac{1}{t^2 + \tau} \right) \leq \frac{1}{t^2 + \min\{\tau,|E_B|\}}.
$$
This implies $g_\mu(\mu) \leq ( \mu + \min\{\tau,|E_B|\} )^{-1}$ as well as
\begin{align*} 
   \int\limits_{\sqrt{\mu}}^\infty g_\mu(t^2) \, dt
   &\leq \int\limits_{\sqrt{\mu}}^\infty \frac{dt}{t^2 + \min\{\tau,|E_B|\}}
   \leq \int\limits_{\sqrt{\mu}}^\infty \frac{dt}{(t - \sqrt{\mu})^2 + \mu + \min\{\tau,|E_B|\}} \\
   &= \frac{\pi}{2} \cdot \frac{1}{\sqrt{\mu + \min\{\tau,|E_B|\}}}
\end{align*}
where we used $t^2 = (t + \sqrt{\mu})(t - \sqrt{\mu}) + \mu \geq (t - \sqrt{\mu})^2 + \mu$ in the denominator.
\end{proof}

\begin{lemma}\label{Lemma_bound_for_e_p}
There is a constant $C > 0$ satisfying the following property. If $\widetilde{\mu}$ is large enough, then
\begin{align}
   \left\vert e_P(\mu) + \frac{\mu}{\log \widetilde{\mu}} \right\vert 
   \leq C \cdot \frac{\mu \cdot \log \log \widetilde{\mu}}{(\log \widetilde{\mu})^2} \label{Bound_for_e_p}
\end{align}
for all $\widetilde{L} \geq 1$.
\end{lemma}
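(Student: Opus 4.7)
The plan is to analyze the polaron equation \eqref{Polaron_Equation} by (i) using Lemma \ref{Lemma_Bound_G_mu} to replace $G_\mu(-k^2 - e_P)$ by its leading logarithmic behavior, (ii) approximating the resulting lattice sum by an integral, and (iii) exploiting monotonicity to convert the integral estimate into a sandwich for $e_P(\mu)$. Concretely, the function
\[
   F(e) \;:=\; -e \;-\; \frac{1}{L^2}\sum_{k^2 \leq \mu}G_\mu(-k^2 - e)^{-1}
\]
is strictly decreasing in $e$ on the range where each $G_\mu(-k^2 - e)$ is positive, since $\tau \mapsto G_\mu(\tau)$ is monotonically increasing, and $F(e_P(\mu)) = 0$. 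Hence it suffices to choose trial values $e^{\pm} := -\mu/\log\widetilde{\mu} \pm C\mu\log\log\widetilde{\mu}/(\log\widetilde{\mu})^2$ and verify $F(e^-) \geq 0 \geq F(e^+)$ for a suitable constant $C$, which sandwiches $e_P(\mu)$ in the required interval.

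For $e$ in this regime and all $k^2 \leq \mu$, the quantity $(\mu - k^2 - e)/|E_B|$ remains of order at least $\widetilde{\mu}/\log\widetilde{\mu}$, which keeps the constant $K(\widetilde{\tau}, \widetilde{\mu}, \widetilde{L})$ from \eqref{Definition_K_mu} bounded and the logarithm at least of order $\log\widetilde{\mu}$. Inverting the approximation from Lemma \ref{Lemma_Bound_G_mu} then gives
\[
   G_\mu(-k^2-e)^{-1} \;=\; \frac{4\pi}{\log((\mu-k^2-e)/|E_B|)} \;+\; O\!\left(\frac{1}{\widetilde{L}(\log\widetilde{\mu})^2}\right),
\]
whose error, summed over the $N(\mu) = \mu L^2/(4\pi) + O(\sqrt{\mu}\,L)$ momenta, contributes at most $O(\mu/(\widetilde{L}(\log\widetilde{\mu})^2))$ to $\frac{1}{L^2}\sum G_\mu^{-1}$. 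Passing to polar coordinates and replacing the lattice sum $\frac{1}{L^2}\sum_{k^2 \leq \mu}$ by $\frac{1}{4\pi}\int_0^\mu ds$, with a boundary error of order $\sqrt{\mu}/(L\log\widetilde{\mu})$, followed by the substitution $u = \mu - s - e$ and the rescaling $a := -e/|E_B|$, reduces the polaron equation to the continuum form
\[
   -e \;\approx\; |E_B| \int_a^{\widetilde{\mu} + a}\frac{dv}{\log v}.
\]

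The remaining step is elementary: from $\log a \leq \log v \leq \log(\widetilde{\mu} + a)$ on the integration interval one obtains
\[
   \frac{\widetilde{\mu}}{\log(\widetilde{\mu} + a)} \;\leq\; \int_a^{\widetilde{\mu}+a}\frac{dv}{\log v} \;\leq\; \frac{\widetilde{\mu}}{\log a},
\]
and at the trial values $a^\pm = \widetilde{\mu}/\log\widetilde{\mu} \mp C\widetilde{\mu}\log\log\widetilde{\mu}/(\log\widetilde{\mu})^2$ one has $\log a^\pm = \log\widetilde{\mu} - \log\log\widetilde{\mu} + O(\log\log\widetilde{\mu}/\log\widetilde{\mu})$ and $\log(\widetilde{\mu} + a^\pm) = \log\widetilde{\mu} + O(1/\log\widetilde{\mu})$. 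A direct arithmetic comparison then shows that $F(e^\pm)$ acquire the desired signs once $C$ exceeds a fixed constant depending on the bounds above. The main obstacle is the uniform-in-$\widetilde{L}$ accounting: when $\widetilde{L}$ is only slightly above $1$, the factor $K/\widetilde{L}$ in Lemma \ref{Lemma_Bound_G_mu} is merely of constant size, so one must verify that $K(\widetilde{\tau}, \widetilde{\mu}, \widetilde{L})$ stays bounded uniformly in $k^2 \leq \mu$, in particular at the Fermi surface where $(\mu - k^2 - e)/|E_B|$ attains its minimum of order $\widetilde{\mu}/\log\widetilde{\mu}$, and that the combined error does not exceed $O(\mu/(\log\widetilde{\mu})^2)$. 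The $\log\log\widetilde{\mu}$ factor in \eqref{Bound_for_e_p} is intrinsic, arising from the gap between $\log a$ and $\log\widetilde{\mu}$ in the elementary monotonicity bound on the integral, and is not a loss introduced by the approximation steps.
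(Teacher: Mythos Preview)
Your outline is correct and leads to the stated bound, but you take a longer route than the paper. The paper avoids the pointwise expansion of $G_\mu(-k^2-e_P)^{-1}$ and the passage to an integral entirely: it simply uses the monotonicity of $G_\mu$ to replace every term in the sum $\frac{1}{L^2}\sum_{q^2\le\mu}G_\mu(-q^2+z_P)^{-1}$ by its value at one of the two extreme arguments $\tau=z_P$ and $\tau=z_P-\mu$, obtaining
\[
   \frac{N(\mu)}{L^2}\,G_\mu(z_P)^{-1}\;\le\; z_P \;\le\; \frac{N(\mu)}{L^2}\,G_\mu(z_P-\mu)^{-1},
\]
and then invokes Lemma~\ref{Lemma_Bound_G_mu} only at these two endpoints. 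This produces exactly the bounds $\mu/\log(\widetilde\mu+\widetilde z_P)\lesssim z_P\lesssim \mu/\log\widetilde z_P$ that you recover from your crude estimate of $\int_a^{\widetilde\mu+a}dv/\log v$, so your integral step is superfluous at this level of precision. The paper also organizes the argument as a bootstrap (first the lower bound on $\widetilde z_P$, using the a~priori bound \eqref{A_priori_bound_z_P}; then the upper bound, which needs the lower bound to control $K(\widetilde z_P-\widetilde\mu,\widetilde\mu,\widetilde L)$), whereas your trial-value approach with $e^\pm$ packages both directions at once---this is a legitimate alternative and slightly cleaner conceptually.

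Two technical remarks on your sketch: Lemma~\ref{Riemann} is stated for monotonically \emph{decreasing} functions, while $k^2\mapsto 4\pi/\log((\mu-k^2-e)/|E_B|)$ is increasing on $[0,\mu]$, so you would need the obvious variant; and you should state explicitly that each $G_\mu(-k^2-e^\pm)$ is positive on the trial interval (which follows from \eqref{Bound_G_mu} once $(\mu-k^2-e)/|E_B|\gtrsim\widetilde\mu/\log\widetilde\mu$), so that $F$ is well-defined and monotone there.
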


\begin{proof}
We set $z_P := |e_P(\mu)|$ and $\widetilde{z}_P := z_P/|E_B|$ for notational convenience. In the first part of the proof we derive a lower bound for 
$\widetilde{z}_P$. By \eqref{Polaron_Equation} and $G_\mu(- q^2 + z_P) \leq G_\mu(z_P)$, we obtain
\begin{equation} \label{First_Lower_Bound_z_P}
   z_P \geq G_\mu(z_P)^{-1} \Bigg( L^{-2} \sum_{q^2 \leq \mu} 1 \Bigg).
\end{equation}
By \eqref{Naive_Lower_Bound} and \eqref{Upper_Bound}, 
\begin{equation} \label{A_priori_bound_z_P}
   \widetilde{z}_P \leq \widetilde{\mu} + 1.
\end{equation}
Combining this inequality with Lemma \ref{Lemma_Bound_G_mu}, we obtain
$$
   G_\mu(z_P) \leq \frac{1}{4 \pi} \log\widetilde{\mu} + \frac{1}{4 \pi} \log(2 + 1/\widetilde{\mu}) + \frac{K(\widetilde{z}_P, \widetilde{\mu}, 
\widetilde{L})}{\widetilde{L}}.
$$
Since $\widetilde{z}_P \geq 0$, $K(\widetilde{z}_P,\widetilde{\mu},\widetilde{L})$ is uniformly bounded for $\widetilde{\mu}, \widetilde{L} \geq 1$ and it 
follows that there is a constant $C_1 > 0$ such that
$4 \pi G_\mu(z_P) \leq \log(\widetilde{\mu}) + C_1/(4\pi)$ and consequently
\begin{equation} \label{Constant_C_1}
   G_\mu(z_P)^{-1} \geq \frac{4 \pi}{\log\widetilde{\mu}} - \frac{C_1}{(\log\widetilde{\mu})^2}.
\end{equation}
for all $\widetilde{\mu}, \widetilde{L} \geq 1$. By Lemma \ref{Riemann} (a),
\begin{align}
 \Bigg|  L^{-2} \sum_{q^2\le \mu} 1  - \frac{\mu}{4\pi} \Bigg| 
 \leq  \frac{2\sqrt\mu}{\pi L} + \frac{3}{L^2} \label{Estimate_Eta_mu}
\end{align}
By \eqref{First_Lower_Bound_z_P}, \eqref{Constant_C_1} and \eqref{Estimate_Eta_mu}
$$
   \widetilde{z}_P 
   \geq \frac{\widetilde{\mu}}{\log\widetilde{\mu}} - \frac{C_1}{4\pi} \frac{\widetilde{\mu}}{(\log\widetilde{\mu})^2}
   - \frac{4 \pi}{\log\widetilde{\mu}} \left(\frac{2 \sqrt{\widetilde{\mu}}}{\pi\widetilde{L}} + \frac{3}{\widetilde{L}^2} \right)
$$
and hence there is a constant $C_2 > 0$ such that if $\widetilde{\mu}$ is large enough,
\begin{equation} \label{Constant_C_2}
   \widetilde{z}_P 
   \geq \frac{\widetilde{\mu}}{\log\widetilde{\mu}} - C_2 \cdot \frac{\widetilde{\mu}}{(\log\widetilde{\mu})^2}
\end{equation}
for all $\widetilde{L} \geq 1$.

Now, we derive an upper bound for $\widetilde{z}_P$. By \eqref{Polaron_Equation} and $G_\mu(z_P - q^2) \geq G_\mu(z_P - \mu)$ for $q^2 \leq \mu$,
\begin{equation} \label{First_Upper_Bound_z_P}
   z_P \leq G_\mu(z_P - \mu)^{-1} \Bigg( L^{-2} \sum_{q^2 \leq \mu} 1 \Bigg). 
\end{equation}
By Lemma \ref{Lemma_Bound_G_mu},
$$
   G_\mu(z_P - \mu) \geq \frac{1}{4\pi} \log\widetilde{z}_P - \frac{K(\widetilde{z}_P - \widetilde{\mu},\widetilde{\mu},\widetilde{L})}{\widetilde{L}}.
$$
We employ \eqref{Constant_C_2} to obtain the following two statements. Firstly, by \eqref{A_priori_bound_z_P},
$$
   K(\widetilde{z}_P - \widetilde{\mu},\widetilde{\mu},\widetilde{L}) = 1 + \frac{3}{\widetilde{L}} + \frac{1}{\sqrt{\widetilde{z}_P}} + 
\frac{1}{\widetilde{z}_P} \cdot \left( \frac{4 \sqrt{\widetilde{\mu}}}{\pi} + \frac{6}{\widetilde{L}} \right),
$$
which is bounded uniformly for $\widetilde{L} \geq 1$ and $\widetilde{\mu}$ large. Secondly, $\log(\widetilde{z}_P) \geq 
\log(\widetilde{\mu}) - 2 \log \log (\widetilde{\mu})$ for all $\widetilde{L} \geq 1$ if $\widetilde{\mu}$ is large enough. Thus, there is a constant $C_3 > 0$ 
such that if $\widetilde{\mu}$ is large enough
\begin{equation} \label{Constant_C_3_Inverse}
   4 \pi G_\mu(z_P - \mu) \geq \log\widetilde{\mu} - C_3/(2\pi) \cdot \log\log\widetilde{\mu}
\end{equation}
and consequently
\begin{equation} \label{Constant_C_3}
   G_\mu(z_P - \mu)^{-1} \leq \frac{4\pi}{\log\widetilde{\mu}} + C_3 \cdot \frac{\log\log\widetilde{\mu}}{(\log\widetilde{\mu})^2} 
\end{equation}
for all $\widetilde{L} \geq 1$.
By \eqref{First_Upper_Bound_z_P}, \eqref{Constant_C_3} and \eqref{Estimate_Eta_mu} there is a constant $C_4 > 0$ such 
that if $\widetilde{\mu}$ is large enough
\begin{equation*}
   \widetilde{z}_P 
   \leq \frac{\widetilde{\mu}}{\log\widetilde{\mu}} + C_4 \cdot \frac{\widetilde{\mu} \cdot \log\log\widetilde{\mu}}{(\log\widetilde{\mu})^2}
\end{equation*}
for all $\widetilde{L} \geq 1$, and the proof of the lemma is complete
\end{proof}

\begin{proof}[Proof of Lemma \ref{Lemma_perturberd_polaron_bound}] We prove (a) by deriving a bound for $r_{\!\mu,\lambda}$, cf.\ \eqref{definition_konstante}. Let $\widetilde{L} \geq 1$ and $\lambda \leq E_0(\mu) + e_P(\mu)$ be arbitrary. It holds $\norm{a(\eta_\mu)} = \norm{a^*(\eta_\mu)} = \norm{\eta_\mu}$ and by 
\eqref{Estimate_Eta_mu},
\begin{align}
  \norm{\eta_\mu}^2 &= \frac{1}{L^2}\sum_{k^2 \leq \mu} 1 \leq \frac{\mu}{4\pi} + \frac{2 \sqrt{\mu}}{\pi L} + 
\frac{3}{L^2}. 
\label{Sum_over_one_from_k_to_mu}
\end{align}
Moreover, by \eqref{Pull_Through_Formula},
\begin{align*}
  A_{\mu,\lambda} A_{\mu,\lambda}^* &= \left( \sum_{k^2 > \mu} \frac{L^{-2}}{(T + k^2 - \lambda)^2} - \sum_{k^2, l^2 > \mu} \!\! 
a_l \: \frac{L^{-2}}{(T + k^2 + l^2 - 
\lambda)^2} \: a_k^* \right) \rst \HH_{\Nmu-1}.
 \end{align*}
We drop the second term employing a positivity argument similar to \eqref{Easy_Positivity_Argument}. Then, 
using $T \geq E_0(\mu) - \mu$ on $\HH_{\Nmu-1}$, we obtain
\begin{equation} \label{Bound_A_mu}
   A_{\mu,\lambda} A_{\mu,\lambda}^*
   \leq \frac{1}{L^2} \sum_{k^2 > \mu} \frac{1}{(E_0(\mu) - \mu + k^2 - \lambda)^2}.
\end{equation}
Next we invoke $E_0(\mu) - \lambda \ge - e_P(\mu)$, and then apply Lemma \ref{Riemann} (b), to bound \eqref{Bound_A_mu} by
$$
\frac{1}{L^2} \sum_{k^2 > \mu} \frac{1}{(k^2 - \mu -e_P(\mu) )^2}
\leq \frac{1}{4\pi |e_P(\mu)|} + \frac{1}{2 L |e_P(\mu)|^{3/2}} + \left( \frac{4 \sqrt{\mu}}{\pi L} + \frac{6}{L^2} \right) \frac{1}{|e_P(\mu)|^2}.
$$
In this estimate we also employed
$$
   \int\limits_{\sqrt{\mu}}^\infty \frac{dt}{(t^2 - \mu - e_P(\mu))^2}
   \leq \int\limits_{\sqrt{\mu}}^\infty \frac{dt}{((t-\sqrt{\mu})^2 - e_P(\mu))^2}
   = \frac{\pi}{4} \cdot \frac{1}{|e_P(\mu)|^{3/2}}.
$$
With the bounds for $\norm{a(\eta_\mu)}$ and $\norm{A_{\mu,\lambda}^*}$ we obtain
$$
   r_{\!\mu,\lambda}^2 \leq 4 \left( \frac{\widetilde{\mu}}{4\pi} + \frac{2 \sqrt{\widetilde{\mu}}}{\pi \widetilde{L}} + \frac{3}{\widetilde{L}^2} \right) 
\cdot \left( 
\frac{1}{4\pi \widetilde{z}_P} + \frac{1}{2 \widetilde{L} \widetilde{z}_P^{3/2}} + \left( \frac{4 \sqrt{\widetilde{\mu}}}{\pi \widetilde{L}} + 
\frac{6}{\widetilde{L}^2} \right) \frac{1}{\widetilde{z}_P^2} \right),
$$
where $\widetilde{z}_P := |e_P(\mu)|/|E_B|$. Thus, by Lemma \ref{Lemma_bound_for_e_p}, there is $C > 0$ such that if $\widetilde{\mu}$ is large,
\begin{equation} \label{Estimate_R_mu}
  r_{\!\mu,\lambda} \leq C \sqrt{\log\widetilde{\mu}}
\end{equation}
for all $\widetilde{L} \geq 1$ and $\lambda \leq E_0(\mu) + e_P(\mu)$. Together with \eqref{Constant_C_3_Inverse} this 
implies Statement (a).

For the proof of (b), we write \eqref{Perturbed_Polaron_Equation} as $E_0(\mu) - \lambda = f(\lambda)$ in terms of the continuous, strictly monotonically 
increasing function $f: (-\infty, E_0(\mu) + e_P(\mu)) \to \R$ with
$$
  f(\lambda) = \frac{1}{L^2} \sum_{k^2 \leq \mu} (G_\mu(E_0(\mu)-k^2-\lambda)-r_{\mu,\lambda})^{-1}.
$$
By the definition of $G_\mu$ and \eqref{Estimate_R_mu}, $f(\lambda) \to 0$ as $\lambda \to -\infty$. Next consider $f(\Lambda(\mu))$ with $\Lambda(\mu) := 
E_0(\mu) + e_P(\mu)$. With the help of \eqref{Polaron_Equation},
$$
   f(\Lambda(\mu)) = -e_P(\mu) + \frac{r_{\mu,\Lambda(\mu)}}{L^2} \sum_{k^2 \leq \mu} \frac{1}{(G_\mu(-k^2-e_P(\mu))-r_{\mu,\Lambda(\mu)})\cdot 
G_\mu(-k^2-e_P(\mu))},
$$
and by (a), $f(\Lambda(\mu)) \geq -e_P(\mu)$ if $\widetilde{\mu}$ is large enough. These observations imply that there is a unique $\lambda(\mu) \in (-\infty, 
\Lambda(\mu))$ satisfying $E_0(\mu) - \lambda(\mu) = f(\lambda(\mu))$.

To obtain the stated bound for $\lambda(\mu)$, apply $E_0(\mu) - \lambda(\mu) \geq -e_P(\mu)$ to the argument of $G_\mu$ in \eqref{Perturbed_Polaron_Equation} and 
combine the resulting estimate with \eqref{Polaron_Equation} to obtain
$$
   E_0(\mu) + e_P(\mu) - \lambda(\mu) \leq \frac{1}{L^2} \sum_{k^2 \leq \mu} \frac{r_{\!\mu,\lambda(\mu)}}{(G_\mu(-k^2-e_P(\mu)) - r_{\!\mu,\lambda(\mu)})\cdot 
G_\mu(-k^2-e_P(\mu))}.
$$
By $G_\mu(-k^2-e_P(\mu)) \geq G_\mu(-\mu-e_P(\mu))$ for $k^2 \leq \mu$ and \eqref{Polaron_Equation} we arrive at
$$
   E_0(\mu) + e_P(\mu) - \lambda(\mu) \leq |e_P(\mu)| \cdot \frac{r_{\!\mu,\lambda(\mu)}}{G_\mu(-\mu-e_P(\mu)) - r_{\!\mu,\lambda(\mu)}}.
$$
Statement (b) now follows from \eqref{Constant_C_3_Inverse} and \eqref{Estimate_R_mu}.
\end{proof}

\section*{Appendix A}
\label{Sect:Appendix_A}

\setcounter{thm}{0}
\renewcommand{\thesection}{A}

In this appendix we prove \eqref{Upper_Lower_Bound_Criterion} by applying Theorem 6.3 and Corollary 
6.4 of \cite{GL_Variational} to the present case.

The one-body operator $h$ is the limit in the norm resolvent sense of
$$
   h_n := -\Delta - g_n \ket{\eta_n} \bra{\eta_n}
$$ 
as $n \to \infty$, where $\eta_n := L^{-1} \cdot \sum_{k^2 \leq n} \varphi_k$ and $g_n^{-1} = \sprod{\eta_n}{(-\Delta - 
E_B)^{-1} 
\eta_n}$. Thus, $\exp(ith_n) \to \exp(ith)$ strongly as $n \to \infty$ for all $t \in \R$, cf.\ \cite[Theorem 
VIII.21]{RS1}. Hence, $(\exp(ith_n))^{\otimes\Nmu}$ converges strongly to $(\exp(ith))^{\otimes\Nmu}$ for all $t 
\in \R$, where the former is 
the unitary group of 
$$ 
   H_\mu^{(n)} := d\Gamma(h_n) \rst \HH_{\Nmu} = (T - g_n a^*(\eta_n) a(\eta_n))\rst \HH_{\Nmu}
$$
and the latter is the unitary group of $H_\mu = d\Gamma(h) \rst \HH_{\Nmu}$. This implies that $H_\mu^{(n)} \to 
H_\mu$ as $n \to \infty$ in the strong resolvent sense.

The resolvent of $H_\mu^{(n)}$ can be given explicitly. In fact,
\begin{equation} \label{Regularized_Resolvent}
   (H_\mu^{(n)} - z)^{-1} = (T - z)^{-1} + (T - z)^{-1} a^*(\eta_n) \: \phi_\mu^{(n)}(z)^{-1} \: a(\eta_n) (T - 
z)^{-1},
\end{equation}
on $\HH_{\Nmu}$ for all $z \in \rho(T \rst \HH_{\Nmu})$ satisfying $0 \in \rho(\phi_\mu^{(n)}(z))$, where
$$
   \phi_\mu^{(n)}(z) := (g_n^{-1} - a(\eta_n) (T - z)^{-1} a^*(\eta_n)) \rst \HH_{\Nmu - 1}.
$$
For $z < 0$ in normal-ordered form,
\begin{equation} \label{Def:Phi_n}
  \phi_\mu^{(n)}(z) = \frac{1}{L^2} \sum_{k^2 \leq n} \left( \frac{1}{k^2 - E_B} - \frac{1}{T + k^2 - z} \right) 
+ \frac{1}{L^2} \sum_{k^2,l^2 \leq n} a_k^* \, \frac{1}{T + k^2 + l^2 - z} \, a_l 
\end{equation}
on $\HH_{\Nmu}$.

With this at hand one shows the following facts about the terms in \eqref{Regularized_Resolvent}.
\begin{enumerate}[label=(\roman*)]
  \item $a(\eta_n) (T-z)^{-1}$ converges in $\LL(\HH_{\Nmu},\HH_{\Nmu-1})$ to an operator $R_z$. The existence of this 
limit is easily established with the help of \eqref{Pull_Through_Formula}.
  \item Let $D \subseteq \HH_{\Nmu-1}$ denote the space of finite linear combinations of states of the form 
$\varphi_{k_1} \wedge ... \wedge \varphi_{k_{\Nmu-1}}$. Then for $\psi \in D$ and $z < 0$, $$\phi_\mu^{(n)}(z) \psi \to 
\phi_\mu(z) \psi \qquad (n \to \infty)$$ (cf. \eqref{Def:Phi}, \eqref{Def:Phi_n}), and $\phi_\mu(z)$ 
is essentially self-adjoint on $D$. For the latter fact note that the last term in \eqref{Def:Phi} is a bounded 
operator.
  \item Since the last term in \eqref{Def:Phi_n} is a positive operator (cf. \eqref{Easy_Positivity_Argument}) and $T 
\rst \HH_{\Nmu-1} \geq E_0(\mu) - \mu$, there is a $c_z > 0$ for each $z < E_0(\mu) - \mu + E_B$ such that 
$\phi_\mu^{(n)}(z) \geq c_z$ for all $n \in \N$.
\end{enumerate}
These facts imply that we are in the situation of Theorem 4.2 in \cite{GL_Variational}. Hence, by this theorem
\begin{equation} \label{Many_Body_Resolvent}
   (H_\mu - z)^{-1} = (T - z)^{-1} + R_{\overline{z}}^* \phi_\mu(z)^{-1} R_z
\end{equation}
on $\HH_{\Nmu}$ for $z < E_0(\mu) - \mu + E_B$.

Since $(T - z)^{-1} \rst \HH_{\Nmu}$ is compact and $\ker R_z^* = \{ 0 \}$, we can apply the results of 
\cite[Section 6]{GL_Variational} including Theorem 6.3 and Corollary 6.4, which 
yields \eqref{Upper_Lower_Bound_Criterion}.

\section*{Appendix B}
\label{Sect:Appendix_B}

\setcounter{thm}{0}
\renewcommand{\thesection}{B}

\begin{lemma} \label{Riemann} (a) Let $f:[0,\infty)\to[0,\infty)$ be monotonically decreasing. Then,
\begin{align}
 \left| \frac{1}{L^2} \sum_{k}f(k^2) - \frac{1}{2\pi}\int\limits_0^{\infty} f(t^2)t\, dt \right| \leq \frac{2}{\pi L}\int\limits_0^{\infty}f(t^2)\,dt + \frac{3 
f(0)}{L^2}.\label{Lemma_Riemann_a}
\end{align}
(b) Let $m>0$ and $f:[m,\infty)\to[0,\infty)$ be monotonically decreasing. Then,
\begin{align}
 \left| \frac{1}{L^2} \sum_{k^2>m}f(k^2) - \frac{1}{2\pi}\int\limits_{\sqrt{m}}^{\infty} f(t^2)t\, dt \right| 
 \leq \frac{2}{\pi L} \int\limits_{\sqrt{m}}^{\infty} f(t^2)\, dt + \left( \frac{4 \sqrt{m}}{\pi L} + \frac{6}{L^2} \right) f(m). \label{Lemma_Riemann_b}
\end{align}
\end{lemma}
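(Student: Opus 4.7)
Both inequalities compare a two-dimensional Riemann sum against a radial integral of a monotonically decreasing function, and my strategy is to reduce each estimate to iterated one-dimensional versions, exploiting that for each fixed $k_1$ the function $k_2 \mapsto f(k_1^2 + k_2^2)$ is even and decreasing in $|k_2|$.

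The one-dimensional ingredient is elementary: if $\phi \colon \R \to [0,\infty)$ is even and decreasing on $[0,\infty)$ and $h > 0$, then
\[
  \left| h \sum_{n \in \Z} \phi(nh) - \int_{\R} \phi(t)\,dt \right| \leq h\, \phi(0),
\]
which follows by doubling the standard two-sided Riemann-sum comparison $\int_0^\infty \phi - h\phi(0) \leq h \sum_{n \geq 1} \phi(nh) \leq \int_0^\infty \phi$ valid for a decreasing integrand. For part (a) with $h = 2\pi/L$, I apply this bound first to the inner sum over $k_2$ (with $\phi(t) = f(k_1^2 + t^2)$ and $\phi(0) = f(k_1^2)$), then to the outer sum over $k_1$ (with $\psi(s) := \int_{\R} f(s^2 + t^2)\,dt$ and $\psi(0) = 2 \int_0^\infty f(t^2)\,dt$), and a third time to the residual cross term $h \sum_{k_1} f(k_1^2)$ coming from the inner error. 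The three errors combine, via the polar identity $\int_{\R^2} f(|x|^2)\,dx = 2\pi \int_0^\infty f(t^2)\, t\, dt$, into
\[
  \left| h^2 \sum_k f(k^2) - 2\pi \int_0^\infty f(t^2)\, t\, dt \right| \leq 4h \int_0^\infty f(t^2)\,dt + h^2 f(0),
\]
and division by $(2\pi)^2$ yields \eqref{Lemma_Riemann_a}.

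For part (b), I reduce to (a) via the monotonically decreasing extension $\tilde f(t) := f(\max(t,m))$, which is defined on $[0,\infty)$ with $\tilde f(0) = f(m)$. The algebraic identities
\[
  \sum_{k^2 > m} f(k^2) = \sum_k \tilde f(k^2) - f(m)\cdot\#\{k : k^2 \leq m\}, \qquad \int_{\sqrt m}^\infty f(t^2)\, t\,dt = \int_0^\infty \tilde f(t^2)\, t\,dt - \frac{m}{2} f(m),
\]
reduce the task to two applications of (a): one to $\tilde f$ itself, using the decomposition $\int_0^\infty \tilde f(t^2)\,dt = \sqrt m\, f(m) + \int_{\sqrt m}^\infty f(t^2)\,dt$, and one to the indicator $\chi_{[0,m]}$ in order to control the lattice count $\#\{k : k^2 \leq m\}$. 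Gathering the resulting boundary contributions, all of size $f(m)\cdot(\sqrt m / L + 1/L^2)$ up to absolute constants, produces the coefficients $4\sqrt m/(\pi L)$ and $6/L^2$ of $f(m)$ in \eqref{Lemma_Riemann_b}.

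The only genuine difficulty is bookkeeping of constants through the iterated one-dimensional estimate in (a) and the reduction (b)$\Rightarrow$(a); no deeper idea is needed than the elementary one-dimensional Riemann-sum comparison above.
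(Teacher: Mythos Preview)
Your reduction for part (b) is exactly what the paper does: writing $\tilde f = g_m$ and $f(m)\chi_{[0,m]} = h_m$, the paper splits the truncated sum as $\sum_k g_m - \sum_k h_m$ and applies (a) to each piece, obtaining the same constants $4\sqrt m/(\pi L)$ and $6/L^2$.

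For part (a), your route differs from the paper's. The paper argues geometrically, using the fourfold symmetry $k \mapsto f(k^2)$ to reduce to the positive quadrant $\frac{2\pi}{L}\Z_+^2$ (plus axis and origin corrections), and then compares the quadrant Riemann sum directly with $\int_{[0,\infty)^2} f(|x|^2)\,dx$ by monotonicity. You instead Fubinize: apply the one-dimensional bound $|h\sum_{n\in\Z}\phi(nh)-\int_\R\phi|\le h\phi(0)$ first in $k_2$, then in $k_1$ to the resulting marginal $\psi(s)=\int_\R f(s^2+t^2)\,dt$, and once more to bound the residual $h\sum_{k_1}f(k_1^2)$. Your bookkeeping is correct and in fact yields the sharper constant $f(0)/L^2$ in place of the paper's $3f(0)/L^2$; the stated inequality \eqref{Lemma_Riemann_a} follows a fortiori. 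The paper's quadrant argument is perhaps more transparent geometrically, while your iterated one-dimensional estimate is more systematic and immediately generalizes to higher dimensions.
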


\begin{proof} (a) By the symmetry of the function $k\mapsto f(k^2)$,
$$
  \sum_{k\in \frac{2\pi}{L}\Z^2}f(k^2) = 4 \sum_{k\in \frac{2\pi}{L}\Z_{+}^2}f(k^2) + 4 \sum_{k\in \frac{2\pi}{L}(\Z_{+}\times\{0\})}f(k^2) + f(0),
$$
which, by the monotonicity of $f$ is bounded from above by
$$
   \frac{L^2}{2\pi}\int\limits_0^{\infty} f(t^2)t\, dt + \frac{2L}{\pi}\int\limits_0^{\infty}f(t^2)\,dt + f(0).
$$
Moreover, by the symmetry of the function $k\mapsto f(k^2)$,
$$
  \sum_{k\in \frac{2\pi}{L}\Z^2}f(k^2) = 4 \sum_{k\in \frac{2\pi}{L}\Z_{\geq 0}^2}f(k^2) - 4 \sum_{k\in \frac{2\pi}{L}(\Z_{+}\times\{0\})}f(k^2) - 3f(0),
$$
which, by the monotonicity of $f$ is bounded from below by
$$
  \frac{L^2}{2\pi}\int\limits_0^{\infty} f(t^2)t\, dt - \frac{2L}{\pi}\int\limits_0^{\infty}f(t^2)\,dt - 3 f(0).
$$
The combination of upper and lower bound yields the statement of the lemma.\medskip

\noindent (b) We write
\begin{align*}
\frac{1}{L^2}\sum_{k^2>m} f(k^2) = 
\frac{1}{L^2}\sum_{k} g_m(k^2) - \frac{1}{L^2}\sum_{k} h_m(k^2),
\end{align*}
with the non-negative, monotonically decreasing functions
\begin{align*}
g_m(k^2) & := \chi_{(m,\infty)}(k^2) f(k^2) + \chi_{[0,m]}(k^2) f(m),\hspace{1cm}
h_m(k^2) := \chi_{[0,m]}(k^2) f(m).
\end{align*}
The left hand side of \eqref{Lemma_Riemann_b} can thus be estimated in terms of
\begin{align*}
 \left| \frac{1}{L^2}  \sum_{k} g_m(k^2) - \frac{1}{2\pi} \int\limits_{0}^\infty g_m (t^2) t\; dt \right|
 + \left| \frac{1}{L^2}  \sum_{k} h_m(k^2) - \frac{1}{2\pi} \int\limits_{0}^\infty h_m (t^2) t\; dt \right| .
\end{align*}
To the latter expression we apply \eqref{Lemma_Riemann_a} leading to the stated bound in (b).
\end{proof}

\section*{Appendix C}
\label{Sect:Appendix_C}

\setcounter{thm}{0}
\renewcommand{\thesection}{C}

\begin{lemma} \label{Lemma_min_max_estimate} Let $A$ and $B$ be two self-adjoint operators on a Hilbert space $\HH$ such that $A - B$ is of rank one. Then,
$$
   \mu_{i+1}(A) \geq \mu_i(B)
$$
for all $i \in \N$. Here $\mu_i(A)$ and $\mu_i(B)$ denote the min-max values of $A$ and $B$, respectively.
\end{lemma}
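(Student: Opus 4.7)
The plan is to use the min-max characterization directly, exploiting the one-dimensional nature of the range of $A-B$. Since $A$ and $B$ are self-adjoint and $A-B$ is rank one, I can write $A-B = \alpha \ket{\psi}\bra{\psi}$ for some unit vector $\psi \in \HH$ and some $\alpha \in \R$. In particular $A$ and $B$ have the same form domain $\mathcal{Q}$, and on the orthogonal complement of $\psi$ (within $\mathcal{Q}$) the quadratic forms of $A$ and $B$ agree.

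The heart of the argument is a dimension count. I would fix an arbitrary $(i+1)$-dimensional subspace $V \subseteq \mathcal{Q}$ and consider the linear functional $v \mapsto \sprod{\psi}{v}$ restricted to $V$. Its kernel $V' := V \cap \{\psi\}^\perp$ has dimension at least $i$. For every $v \in V'$ one has $(A-B)v = \alpha \sprod{\psi}{v} \psi = 0$ in the quadratic form sense, hence $\sprod{v}{Av} = \sprod{v}{Bv}$.

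Combining this with the two standard forms of the min-max principle gives
\begin{align*}
   \sup_{\substack{v \in V \\ \norm{v}=1}} \sprod{v}{Av}
   \;\ge\; \sup_{\substack{v \in V' \\ \norm{v}=1}} \sprod{v}{Av}
   \;=\; \sup_{\substack{v \in V' \\ \norm{v}=1}} \sprod{v}{Bv}
   \;\ge\; \mu_i(B),
\end{align*}
where the last inequality uses that $V'$ is (at least) an $i$-dimensional subspace of the form domain of $B$, and $\mu_i(B)$ is the infimum of such suprema over $i$-dimensional subspaces. Taking the infimum of the left-hand side over all $(i+1)$-dimensional $V \subseteq \mathcal{Q}$ yields $\mu_{i+1}(A) \ge \mu_i(B)$, as desired.

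I do not expect any serious obstacle. The only mild point requiring care is the domain/form-domain discussion: one must ensure that $V'$ lies in the form domain of $B$, which is immediate since $A-B$ is bounded (being rank one and self-adjoint) and hence $A$ and $B$ share their form domain. If $\dim V' > i$, one simply restricts further to any $i$-dimensional subspace. This is a purely abstract statement about rank-one perturbations and the interlacing of min-max values, and in our application it is instantiated with $A = h$ and $B = -\Delta$, giving \eqref{Eigenvalue_estimates_rankone_perturbation}.
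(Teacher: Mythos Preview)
Your proof is correct and follows essentially the same idea as the paper's: both exploit that on the orthogonal complement of the one-dimensional range of $A-B$ the quadratic forms of $A$ and $B$ coincide, and both feed this into the min-max principle. The only cosmetic difference is that the paper uses the $\sup$--$\inf$ (max-min) formulation, fixing one of the constraint vectors $\psi_i$ to lie in $\ran(A-B)$, whereas you use the dual $\inf$--$\sup$ (Courant--Fischer) formulation, intersecting an arbitrary $(i{+}1)$-dimensional test space with $\{\psi\}^\perp$; these are two sides of the same coin. Your explicit remark that $A-B$ bounded forces the form domains to agree is a welcome bit of extra care that the paper leaves implicit (and in fact the paper only applies the lemma to bounded $A,B$).
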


\begin{proof}
Let $i \in \N$. By definition of the min-max values,
\begin{align*}
   \mu_{i+1}(A) = \sup_{\psi_1,...,\psi_i} \: \inf_{\varphi \perp \psi_1, ... ,\psi_i} \frac{\sprod{\varphi}{A \varphi}}{\norm{\varphi}^2}.
\end{align*}
Restricting the supremum in this expression by fixing $\psi_i$ to a non-zero $v \in \ran(A-B)$ yields a lower bound for $\mu_{i+1}(A)$. Since $A - B$ is of 
rank one, $0 \neq \varphi \perp v$ implies $v \in \ran(A-B)^\perp = \ker(A-B)$ and $\sprod{\varphi}{A \varphi} = 
\sprod{\varphi}{B \varphi}$. Hence,
\begin{align*}
   \mu_{i+1}(A) \geq \sup_{\psi_1,...,\psi_{i-1}} \: \inf_{\varphi \perp \psi_1, ... ,\psi_{i-1},v} \frac{\sprod{\varphi}{B \varphi}}{\norm{\varphi}^2}.
\end{align*}
Lowering the right hand side by extending the infimum to all $\varphi$ perpendicular to $\psi_1, ..., \psi_{i-1}$ only, yields the statement of the lemma.
\end{proof}

\begin{proof}[Proof of \eqref{Eigenvalue_estimates_rankone_perturbation}] We apply Lemma \ref{Lemma_min_max_estimate} to $A := -(h-z)^{-1}$ and $B := 
-(-\Delta-z)^{-1}$ for $z<  E_B$. Since the spectra of these operators are purely discrete and bounded from below, the min-max values of $A$ and $B$ coincide 
with the respective eigenvalues. Then, $\lambda_i = z - 1/\mu_i(A)$ and $\lambda_i^0 = z - 1/{\lambda_i(B)}$ for all $i \in \N$, and 
\eqref{Eigenvalue_estimates_rankone_perturbation} is a direct consequence of Lemma \ref{Lemma_min_max_estimate}.
\end{proof}

\bigskip\noindent
\textbf{Acknowledgements.} We thank Marcel Griesemer for extended discussions and helpful remarks. Our work was supported by the \emph{Deutsche 
Forschungsgemeinschaft (DFG)} through the Research Training Group 1838: \emph{Spectral Theory and Dynamics of Quantum Systems}.

\end{document}